\documentclass[onecolumn,prd,longbibliography,aps,superscriptaddress]{revtex4-1}

\usepackage{amsfonts,amsthm,amsmath,amssymb,bbold, upgreek, dsfont, tipa, epsfig, color, inconsolata, framed, mathtools, sidecap, graphics, graphicx, wrapfig, overpic, hyperref, mathrsfs, textcomp, verbatim, subfigure, hyperref, multirow}

\usepackage[english]{babel}
\usepackage{rotating} 

\usepackage[utf8]{inputenc}
\usepackage[T1]{fontenc}
\usepackage{hhline}
\usepackage[ruled,vlined]{algorithm2e}

\usepackage{tikz, pgfplots} 



\usepackage{tikz}

\date{\today}

\newcommand{\vect}[1]{\boldsymbol{#1}}

\newcommand{\eps}{\epsilon}

\newcommand{\de}{\delta}

\newcommand{\CellDom}{\mathscr{D}}

\DeclareMathOperator*{\argmax}{arg\,max}

\newcommand{\p}{\partial}
\newcommand{\dd}{\text{d}}

\newcommand{\xv}{\vect{x}}

\newcommand{\cv}{c}
\newcommand{\bv}{b}
\newcommand{\fv}{f}

\newcommand{\cvec}{\vec{c}}
\newcommand{\bvec}{\vec{b}}
\newcommand{\fvec}{\vec{f}}

\newcommand{\cn}{A}
\newcommand{\cnvec}{\vec{\cn}}

\newcommand{\Om}{\Omega}

\newcommand{\DelX}{\nabla^2}
\newcommand{\NabX}{\nabla}

\newcommand{\DAPI}{\emph{DAPI} }
\newcommand{\HER}{\emph{HER2} }
\newcommand{\ER}{\emph{ER} }

\newtheorem{prop}{Proposition}[section] 
 
\newtheorem{remark}{Remark}[section]

\begin{document}

\title{Simulated Ablation for Detection of Cells Impacting Paracrine Signalling in Histology Analysis.}

\author{Jake P. Taylor-King}
\affiliation{Mathematical Institute, University of Oxford, Oxford, OX2 6GG, UK}
\affiliation{Department of Integrated Mathematical Oncology, H. Lee Moffitt Cancer Center and Research Institute, Tampa, FL, USA}

\author{Etienne Baratchart}
\affiliation{Department of Integrated Mathematical Oncology, H. Lee Moffitt Cancer Center and Research Institute, Tampa, FL, USA}

\author{Andrew Dhawan}
\affiliation{Department of Oncology, University of Oxford, Old Road Campus Research Building, Roosevelt Drive, Oxford, OX3 7DQ, UK}

\author{Elizabeth A. Coker}
\affiliation{Cancer Research UK Cancer Therapeutics Unit, The Institute of Cancer Research, London, UK}

\author{Inga Hansine Rye}
\affiliation{Department of Genetics, Institute for Cancer Research, Oslo University Hospital Radiumhospitalet, Oslo 0424, Norway}

\author{Hege Russnes}
\affiliation{Department of Genetics, Institute for Cancer Research, Oslo University Hospital Radiumhospitalet, Oslo 0424, Norway}
\affiliation{Department of Pathology, Oslo University Hospital, Oslo 0424, Norway}

\author{S. Jon Chapman}
\affiliation{Mathematical Institute, University of Oxford, Oxford, OX2 6GG, UK}

\author{David Basanta}
\affiliation{Department of Integrated Mathematical Oncology, H. Lee Moffitt Cancer Center and Research Institute, Tampa, FL, USA}

\author{Andriy Marusyk}
\affiliation{Department of Cancer Imaging and Metabolism, H. Lee Moffitt Cancer Center and Research Institute, Tampa, FL, USA}

\date{\today}

\begin{abstract}
Chapman, Basanta and Marusyk are joint senior author. 
\newline
Contact email: jake.taylor-king@sjc.ox.ac.uk.
\newline\newline
Intra-tumour phenotypic heterogeneity limits accuracy of clinical diagnostics and hampers the efficiency of anti-cancer therapies. Dealing with this cellular heterogeneity requires adequate understanding of its sources, which is extremely difficult, as phenotypes of tumour cells integrate hardwired (epi)mutational differences with the dynamic responses to microenvironmental cues. The later come in form of both direct physical interactions, as well as inputs from gradients of secreted signalling molecules. Furthermore, tumour cells can not only receive microenvironmental cues, but also produce them. Despite high biological and clinical importance of understanding spatial aspects of paracrine signaling, adequate research tools are largely lacking. Here, a partial differential equation (PDE) based mathematical model is developed that mimics the process of cell ablation. This model suggests how each cell might contribute to the microenvironment by either absorbing or secreting diffusible factors, and quantifies the extent to which observed intensities can be explained via diffusion mediated signalling. The model allows for the separation of phenotypic responses to signalling gradients within tumour microenvironments from the combined influence of responses mediated by direct physical contact and hardwired (epi)genetic differences. The differential equation is solved around cell membrane outlines using a finite element method (FEM). The method is applied to a multi-channel immunofluorescence \emph{in situ} hybridization (iFISH) stained breast cancer histological specimen and correlations are investigated between: \HER gene amplification; \HER protein expression; and cell interaction with the diffusible microenvironment. This approach allows partial deconvolution of the complex inputs that shape phenotypic heterogeneity of tumour cells, and identifies cells that significantly impact gradients of signalling molecules.

\end{abstract}

\maketitle 

\section{Introduction}


Phenotypic heterogeneity of malignant cells within tumours represents a major clinical challenge as it complicates diagnosis and underpins therapy resistance. This heterogeneity arises as the result of interplay between: a) cell-intrinsic differences stemming from genetic heterogeneity and stable epigenetically defined states; b) stochastically arising variability in gene expression; c) environmental inputs in form of physical forces, contact-mediated signals from neighbouring cells, the extracellular matrix (ECM), and diffusible signals in form of gradients of growth factors, cytokines, oxygen and metabolites \cite{Marusyk_2012}. Understanding exact sources of variability of clinically relevant phenotypic features is of paramount importance. However, deconvolution of the relative impact of inputs that shape phenotypes is extremely challenging as we lack appropriate research tools.



When considering anti-cancer therapeutics, the primary issue that has recently emerged is that heterogeneity is a critical biomarker of tumour prognosis, as greater heterogeneity provides a clear advantage in the face of the evolutionary bottleneck of anti-cancer therapy --- there are simply more paths available to facilitate the generation of resistance \cite{Maley_2006}. The design of therapies must consider the existing heterogeneity in a tumour and account for the most aggressive cells that may exist; and in failing to do so, one may select for pre-existing resistant cells \cite{Martinez_2016}. Furthermore, a critical generator of phenotypic heterogeneity involves microenvironmental variability within the tumour, as a varying environment contributes greatly to different modes of adaption, upon spatially organised subgroups of cells \cite{Swanton_2012}. In fact, it has been shown that heterogeneity of oxygen distribution within a tumour leads to the adaption of subsets of cells in microenvironmental niches to the hypoxic microenvironment, which has been shown to result in poorer prognosis and more aggressive tumours \cite{Lunt_2009}. Lastly, heterogeneity has also been implicated as an important consideration in the field of immunotherapy, where neo-antigen generation is a function of polypeptide heterogeneity, which is a critical determinant of the success of immunotherapy, and reflects underlying genetic heterogeneity \cite{Schumacher_2015}.

Microscopy imaging of histological specimens is widely and routinely used in clinical diagnostics of cancers, as well in experimental studies aiming to understand the underlying biology and responses to anti-cancer therapies. After fixation and placement on glass slides, tissues are subjected to chromogenic or fluorescent staining using chemical or antibody-based stains\footnote{Note that once fixation has taken place, using these cells for sequencing is no longer an option.}. Staining intensity reflects concentration of the chemical moiety, to which the stain binds. Therefore, digitisation of chromogenic or fluorescent signals allows quantification of concentration of the these chemicals. Since histological slides retain spatial information, they are suitable for the analysis of not only cellular phenotypes and genotypes, but also microenvironmental factors that shape cellular heterogeneity \cite{Heindl_2015}. Statistical approaches have been focused on the task on feature extraction, feature selection, and dimension reduction of large histological images. This field is sometimes known as whole-slide imaging (WSI). Feature extraction can happen at both at, a pixel-level, largely uninterpretable by a human (e.g., pixel intensities compared to neighbours), and at an object-level, recovering features that a pathologist would naturally be interested in (e.g., circularity of cell nuclei, location of blood vessels). Feature selection (along with the preceding staining procedure) can be carried out depending on the histopathology of the specific disorder, or based on dimensionality reduction (e.g., using principle component analysis) and irrelevant features can be ignored. Feature analysis is a statistical/machine learning problem, where the aim is to link the collective cell (or pixel) properties to macroscopic disorders/clinical outcomes \cite{Ghaznavi_2013}. Modern trends include considering ecologically motivated spatial statistics \cite{Heindl_2015, Natrajan_2016}. Histology analysis is both cheap and also highly clinically relevant, and a pathologist can diagnose based off an image. However, a histology slide is static and one is seldom able to elucidate any mechanism underlying observations.



Using mathematical modelling, we present a quantitative approach to calculate how each cell alters the local microenvironment using only histological images. Our approach allows us specify two things. First, to what degree a cell is either absorbing from, or secreting into, a local diffusible signalling environment; and second, how much of the observed staining intensities are explainable via diffusion. Therefore, when one stains a histology slide to measure the expression of a specific protein, one is then able to quantify the extent to which that protein contributes to microenvironmental signalling. The method presented is based on postulating that expression levels (as determined by staining intensity) of targets of analyses are determined by an effect of a field of secreted environmental signalling molecules. This signalling field (SF) abstraction integrates all of the secreted factors that impact the expression of a particular phenotypic trait in a paracrine manner (cytokines, gases, metabolites). Our approach ignores physical interactions between cells (e.g., force interactions through the ECM) and relies on availability of multi-channel staining data. 




Substantial mathematical modelling efforts have been focused on understanding the impact of the microenvironment on phenotypic heterogeneity using biophysical principles; common themes include diffusion of cytokines, and construction of chemical reaction networks. Efforts at modelling the microenvironment in a spatial setting have largely been focused on forward modelling, where one makes assumptions and rules for a model, initiates the model in some starting configuration, and then evolves it forward in time. Due to the complexity of the models involved, cellular automata approaches have often been utilised \cite{Anderson_2009, Anderson_2006, Pickup_2013, Basanta_2013, Anderson_2009b}. Such approaches, depending on the complexity of the rules built into the model, are able to reproduce many experimental/clinical observations, and in some cases are capable of making experimentally validated predictions \cite{Araujo_2014}. Unfortunately, mathematical modelling approaches are limited by: the frequent need for model iteration, where model components are added and removed; the complexity of rules needed to describe the behaviour of biological systems, such that underlying assumptions may often be untestable; and the experimental difficulty of obtaining relevant measurements required for adequate parametrisation of the underlying mathematical model. Our method should aid with future modelling as then one can ``rule in'' critical determinants.


Our approach is centred on mimicking the process of \emph{ablation} within the fields of neuroscience \cite{Xu_2016, Bono_2005, Wang_2010} and embryonic development \cite{Bargmann_1995}; this is where one kills or disables a single neuron to observe how the remaining system behaves. Each cell is considered a \emph{functional unit} that changes the SF from an implied \emph{baseline} value, to the \emph{observed} value of the staining intensity. The \emph{baseline} value we calculate as the SF at the cell's location \emph{were the cell not present}; this calculation is performed by solving a steady state diffusion equation (Poisson's equation) with decay. Since parameter estimation is likely impossible, we calculate the signal staining intensities that we expect based on the postulate of all of the variability coming from the impact of the signalling field.  We then compare the expected (\emph{baseline}) staining intensity of each cell with the experimentally measured (\emph{observed}) values. Differences between the \emph{baseline} and \emph{observed} values are interpreted as \emph{cell impact}: how a cell alters the SF. We test applicability of our approach using histological samples of breast cancer.

The paper is structured as follows: in Section \ref{sec_method_overview}, we present the mathematical details behind our method including: assumptions made and parameter selection. Section \ref{sec_molpath_results} discusses our method applied to a multi-channel immunofluorescence \emph{in situ} hybridization (iFISH) stained breast cancer data set, in which \HER gene and \HER protein expression were studied concurrently. Section \ref{sec_toy_system} lays out a toy problem to demonstrate our approach when one uses artificially generated data. In Section \ref{sec_discussion}, we conclude and discuss potential future work.


\section{Method Applied To Paracrine Signalling}\label{sec_method_overview}

Our approach consists of two stages: a mathematical modelling step (Section \ref{subsec_math_model}), and a parameter selection step (Section \ref{subsec_param_sel}). We first pose a class of diffusion models governed by partial differential equations (PDEs) for the description of SF in the extracellular space. This class of models has a number of free parameters: the effective diffusion constant; the rate of decay and the type of boundary condition posed on the cell surfaces. We then consider model selection by asking: what is the discrepancy between the expression of target being analysed (the \emph{observed} staining intensity) and the expected SF \emph{were the cell absent from the histological slide} (the \emph{baseline} staining intensity). Using the assumption that the SF should account for most of the variance in signal that the cell produces, we carry out model selection over the space of possible model parameterisations. Finally, we then have a \emph{baseline} staining signal intensity and an \emph{observed} signal intensity for each cell.


\subsection{Mathematical Modelling}\label{subsec_math_model}

\begin{figure}[t]
  \centering
  \vspace{1em}
  \begin{overpic}[width=0.49\textwidth]{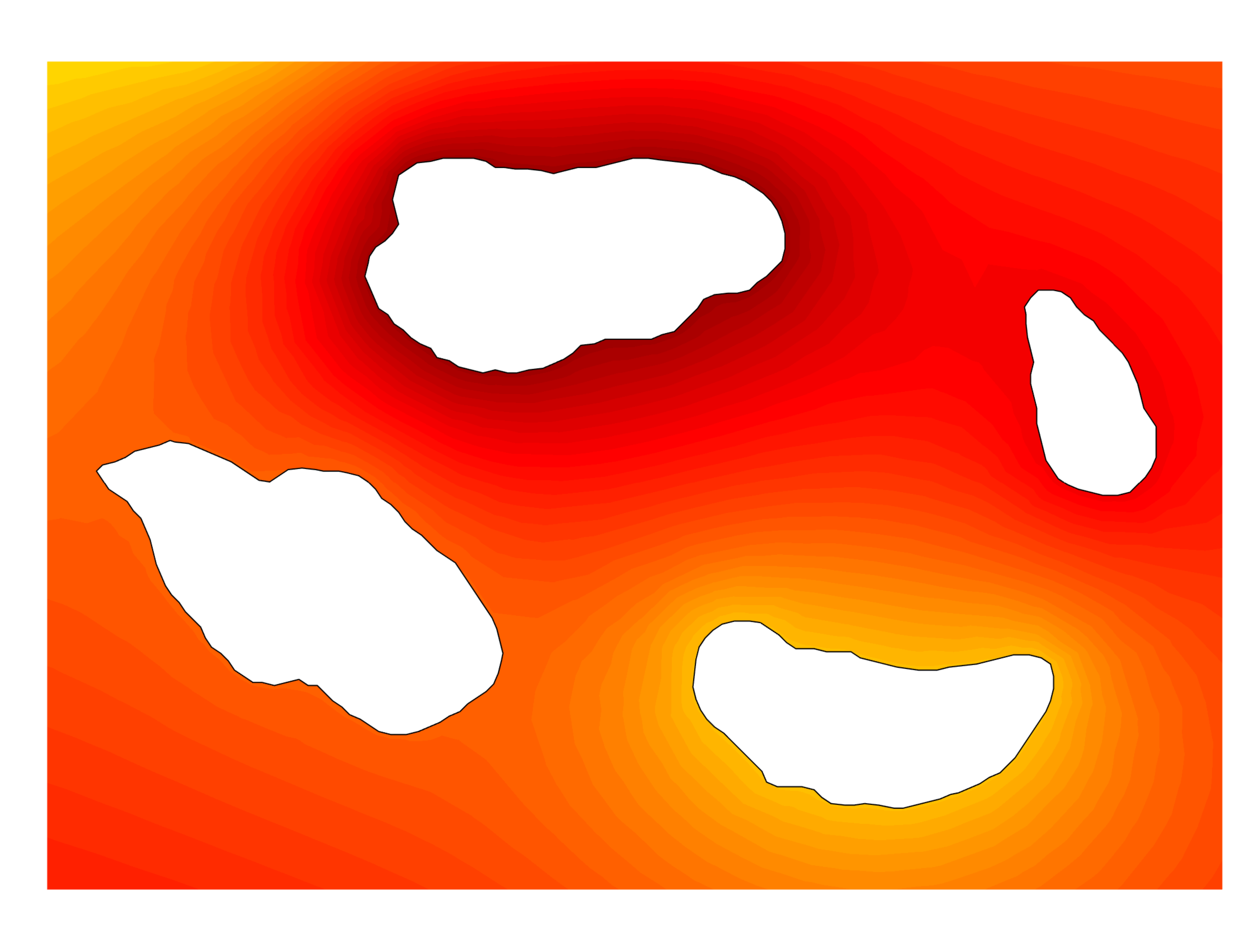}
  	\put(5,66){\Large a.)}
	\put(42,57){\small Cell $j$}
	\put(41,53){\small $(\cv_j, \CellDom_j)$}
	\put(44.2,47.75){\Large \rotatebox{-60}{$\rightarrow$}}
	\put(41.5,43){\small $\vect{n}_j$}
	\put(20,31){\small Cell $k$}
	\put(19,27){\small $(\cv_k, \CellDom_k)$}
	\put(30,36){\Large \rotatebox{50}{$\rightarrow$}}
	\put(33.5,37){\small $\vect{n}_k$}
	\put(67,19){\small Cell $i$}
	\put(66,15){\small $(\cv_i, \CellDom_i)$}
	\put(50.9,23.5){\Large \rotatebox{-185}{$\rightarrow$}}
	\put(50,18){\small $\vect{n}_i$}
	\put(8,7){\LARGE $\Om \subset \mathds{R}^2$}
  \end{overpic} 
   \begin{overpic}[width=0.49\textwidth]{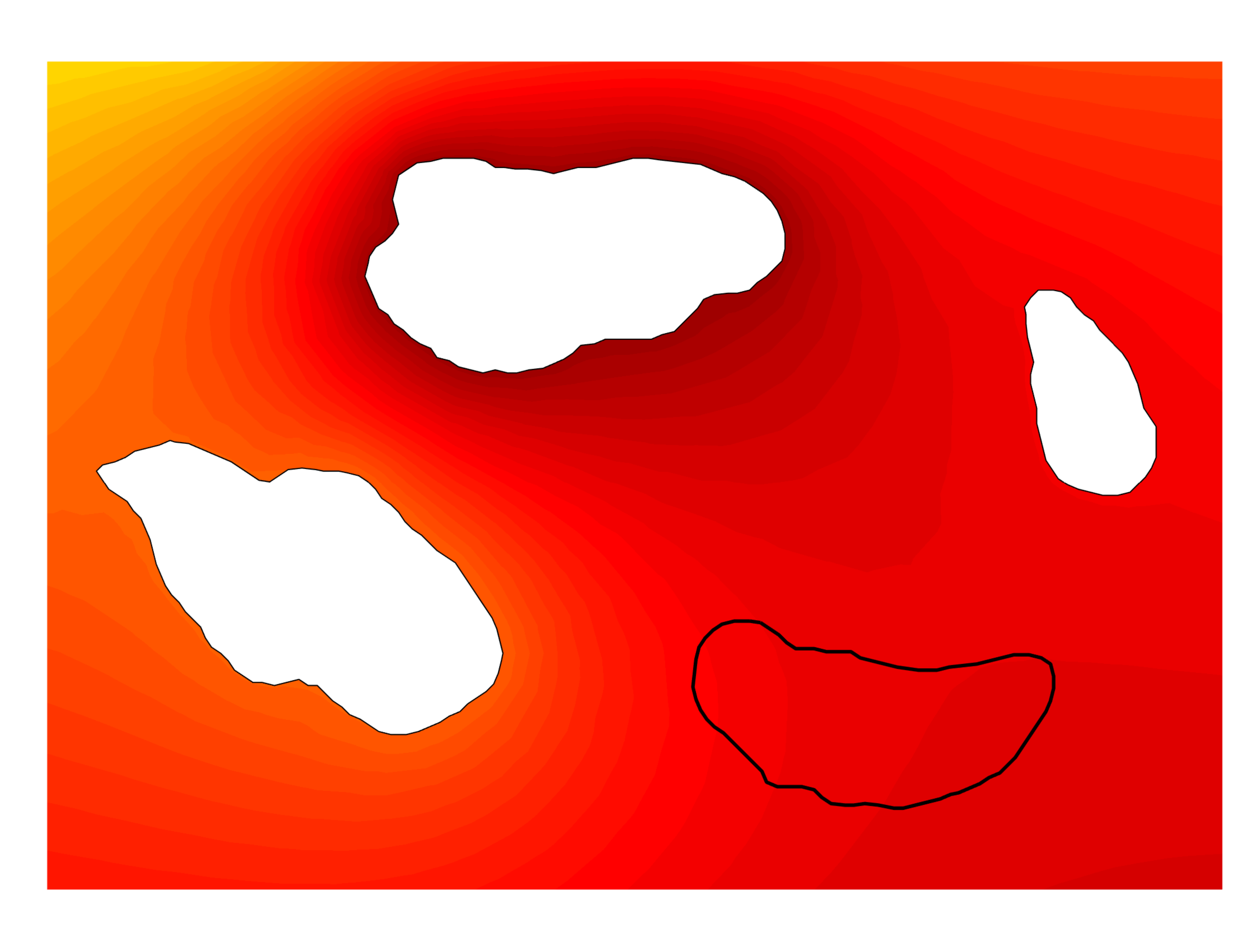}
   	\put(5,66){\Large b.)}
	\put(42,57){\small Cell $j$}
	\put(41,53){\small $(\cv_j, \CellDom_j)$}
	\put(44.2,47.75){\Large \rotatebox{-60}{$\rightarrow$}}
	\put(41.5,43){\small $\vect{n}_j$}
	\put(20,31){\small Cell $k$}
	\put(19,27){\small $(\cv_k, \CellDom_k)$}
	\put(30,36){\Large \rotatebox{50}{$\rightarrow$}}
	\put(33.5,37){\small $\vect{n}_k$}
	\put(8,7){\LARGE $\Om_i = \Om \cup \CellDom_i \subset \mathds{R}^2$}
  \end{overpic} 
  \caption{\footnotesize{Mathematical idealisation of cells on a pathology slice. (a.) Original domain $\Om$, and (b.) modified domain $\Om_i = \Om \cup \CellDom_i$.}}
  \label{fig_cell_domain}
\end{figure}

To model paracrine signalling, we assume that cells communicate via a diffusible species present in the extracellular domain called the Signalling Field (SF), with concentration profile $u = u(\xv)$, which degrades at rate $\lambda>0$. Cells are stained with a probe that binds internally, the chromogenic or fluorescent intensity is not necessarily representative of the SF; however we specify that production of the SF is linearly related to the stain intensity. We do not model other signalling processes such as direct signalling or forces exerted between cells via direct contact (such as adherent or gap junctions) or indirect contact via the ECM.

A histology slide is a 2-dimensional slice though a 3-dimensional tissue. In reality, the cells visible on this slide would be interacting with cells above and below the side (before sectioning the tissue). However, for the purposes of our reconstruction we consider only cells visible in the slide, and consider the signalling field to diffuse in 2-dimensions only. In principle a 3-dimensional analysis could be performed by considering multiple adjacent pathology slides.

We consider $N$ non-overlapping cells labelled $i\in\mathcal{N} = \{1,\dots, N\}$ that occupy volumes $\CellDom_i \subset \mathds{R}^2$ \footnote{Thus $\CellDom_i \cap \CellDom_j = \emptyset$ for $i \neq j$.} (see Figure \ref{fig_cell_domain}(a)), so that extracellular domain is $\Om  = \mathds{R}^2\setminus \bigcup_{i=1}^N \CellDom_i$. We consider the case in which the timescale of diffusion is faster than any other timescale of interest so that the SF is in steady state. The concentration profile $u$ is then governed by 
\begin{equation}\label{eq_gen_chem_prof}
\DelX u  - \alpha^2 u = 0  \text{ in }\Om  \, ,
\end{equation}
where $\alpha^2 = \lambda / \kappa$ and $\kappa>0$ is the diffusion coefficient.

We suppose that the production/absorption of the SF is proportional to the difference in the average cellular stain intensity $\cv_i$ (averaged over the cell) and the concentration of the SF at the cell boundary. Thus we write \footnote{In principle, we could use the spatially varying stain intensity in equation (\ref{eq_gen_chem_prof_bc}) rather than averaging over the cell. Measuring and quantifying a heterogeneous protein concentration within a cell would be experimentally demanding and may require extra analysis, e.g., location of internal cell apparatus. This spatial averaging is often done automatically by WSI software.} 
\begin{equation}\label{eq_gen_chem_prof_bc}
  \vect{n}_i \cdot \NabX u  =  \gamma (\cv_i   - u) \text{ on }\xv\in\p\CellDom_i\, .
\end{equation}


The parameter $\gamma\in(0,\infty)$ is a measure of how strong the cellular response is to a difference between the local SF field and the ``target'' value $\cv_i$. When $\gamma = 0$, the cell does not react to the signalling field at all; as $\gamma\to\infty$ the cell actively absorbs/secretes the SF as fast as it needs to in order to maintain the interior of the cell at the fixed \emph{observed} staining intensity $\cv_i$.

The problem defined by equations \eqref{eq_gen_chem_prof}--\eqref{eq_gen_chem_prof_bc} is well-posed and would allow us to recreate the extracellular SF were parameters ($\alpha,\gamma$) known. However, it was our aim to learn how each cell modifies the SF. To this end, for each cell $i\in\mathcal{N}$, we will determine the difference between the \emph{observed} staining intensity ($\cv_i$) and the expected signal \emph{were the cell not present}. Therefore, we define a new SF problem with cell $i$ removed. Denoting the resulting concentration profile by $u_i$, we then have
\begin{equation}\label{eq_new_chem_prof} 
\DelX u_i  -  \alpha^2 u_i  = 0\text{ in }\Om_i \, ,
\end{equation}
and $\Om_i = \Om  \cup \CellDom_i = \mathds{R}^2 \setminus \bigcup_{j\neq i} \CellDom_j$. Thus the SF now diffuses in the region occupied by cell $i$ also; the boundary conditions on the remaining cells stay the same
\begin{equation}\label{eq_new_chem_prof_bc}
 \vect{n}_j \cdot \NabX u_i    = \gamma (\cv_j - u_i ) \text{ on }\xv\in\p\CellDom_j\, ,
\end{equation}
for all $j = 1,\dots,i-1,i+1,\dots,N$. The domain for the modified problem is illustrated in Figure \ref{fig_cell_domain}(b). The problem is also well posed (see Appendix \ref{app_exist_unique}). The \emph{baseline} SF were cell $i$ not present is then defined as the mean value of the SF over the cell
\begin{equation}\label{eq_baseline_def_1}
\bv_i := \frac{1}{ \vert \CellDom_i \vert }  \int_{\CellDom_i} u_i (\xv) \dd \xv \, .
\end{equation}
We could also average over the cell boundary, however this gives no difference in results qualitatively. We define the \emph{cell impact} to be 
\begin{equation}\label{eq_defn_cellAct}
\fv_i := \cv_i - \bv_i \, ,
\end{equation}
which can be interpreted as follows: if $\fv_i>0$, the cell is secreting factors locally into the SF; and if $\fv_i<0$, then the cell is absorbing factors from the SF.

We solve the PDE systems \eqref{eq_new_chem_prof}--\eqref{eq_new_chem_prof_bc} using a finite element method; see Appendix \ref{app_numerical_approach} for details.

\subsection{Parameter Selection}\label{subsec_param_sel}

Parameter selection is the biggest challenge to implementation of our method. For metabolic processes, diffusion constants and decay rates are frequently known, for example, oxygen, glucose, etc. However, in our case, the specific chemicals comprising the signalling field are unknown (we note again that this is not the chemical stained for, but a hypothesised downstream factor). 


Rather than trying to estimate $\vec{p} = (\alpha, \gamma)$ from experiments, we choose $\vec{p}$ to give a best fit to the data in the following sense. For each stain, we measure the \emph{observed} staining intensities for each cell $\cvec = \{\cv_1, \dots, \cv_N\}$. For a fixed set of model parameters $\vec{p}$, the method described above can be used to generate the \emph{baseline} intensities $\bvec = \{ \bv_1 ( \vec{p}),\dots, \bv_N ( \vec{p} )  \}$. 

We then choose parameters $\vec{p}_*$ such that the coefficient of determination (denoted $R^2$) is maximised, i.e.,
\begin{equation}
\vec{p}_* = \argmax_{ \vec{p} = (\alpha, \gamma)  }  R^2 \, ,
\end{equation}
where 
\begin{equation}\label{eq_Rsq_form}
R^2 = 1 - \frac{  \sum_{i=1}^N  (\cv_i - \bv_i )^2  }{  \sum_{i=1}^N  (\cv_i - \bar{c} )^2    } \, ,
\end{equation}
with
\begin{equation}
\bar{c} = \frac{1}{N}\sum_{i=1}^N \cv_i \, .
\end{equation}
The coefficient $R^2$ measures the fraction of the variance of the stain intensities that can be accounted for by our signalling field model.

\section{Experimental Data}\label{sec_molpath_results}

We apply our approach to the analysis of breast cancer tissue sections stained with nuclear stain \DAPI that reflects cellular DNA content, and two important proteins targets:\footnote{These were done with Immunohistochemistry (IHC) derived methods.}  \HER, a receptor tyrosine kinase that is amplified in subset of breast cancers and is considered to be a potent `driver' gene \cite{Moasser_2007}, and \emph{Oestrogen Receptor (ER)}, a nuclear receptor that mediates cellular response to oestrogen signalling. Both \HER and \ER staining are expected to have profound influence on cells phenotypes. Additionally, using fluorescence \emph{in situ} hybridization (FISH), data regarding the amplification status of the \HER gene is available.

The data used collected in compliance with the Declaration of Helsinki and was approved by the regional ethics committee (REK S-06495b) and the institutional review board of Oslo University Hospital Radiumhospitalet (IRB 2006-53). Full details of the experimental protocol can be found in Refs.~\cite{Trinh_2014,Almendro_2014}.

The cell outlines were found using the GoIFISH software \cite{Trinh_2014}, the pixels that formed the cell outlines were down sampled by a factor of $8$, so that the cell edges were given by polygons. For the FEM scheme, nodes within domain $\Om$ were placed using Halton node placing \cite{Fornberg_2015}. To simplify parameter searches in $\gamma\in (0,\infty)$ (which crosses many orders of magnitude) we mapped it to the bounded variable $\beta = \gamma/(\gamma + 1) \in (0,1)$.

From the method in Section \ref{sec_method_overview}, we maximise $R^2$ given in equation \eqref{eq_Rsq_form} for each of the 3 stains for 3 histology slides. The corresponding values of $\alpha_*$ and $\gamma_*$ are found in Table \ref{table_BCA_params}. From Table \ref{table_BCA_params}, we see that the $R_{\max}^2$ is small for \DAPI as expected: DNA content should be independent of diffusible signalling components. In contrast, for the \HER and \ER stains, Table \ref{table_BCA_params} suggests that the majority of the variance in the \emph{observable} data set can be explained by SF. 

Note that the large values of $\gamma_*$ indicate a strong coupling between the \HER staining intensity and the SF. It is reassuring that $R^2_{\max}$ and $\alpha_*$ are similar for the same stain across different slides. The range of values of $\gamma_*$ are less consistent. 

%

\begin{table}[h]
\centering
 \begin{tabular}{|| c | c || c | c | c || c | c | c || c | c | c ||} 
 \hline
 \multirow{2}{*}{$\quad${\bf Stain}$\quad$} & \multirow{2}{*}{$\quad${\bf Region}$\quad$} & \multicolumn{3}{ c ||}{{\bf Sample 1,} $N = 479$} & \multicolumn{3}{ c ||}{{\bf Sample 2,} $N = 381$} & \multicolumn{3}{ c ||}{{\bf Sample 3,} $N = 524$}  \\
 
 &  & $\alpha_*$ & $\gamma_*$ & $R_{\max}^2$ & $\alpha_*$ & $\gamma_*$ & $R_{\max}^2$ & $\alpha_*$ & $\gamma_*$ & $R_{\max}^2$  \\ [0.2ex] 
 \hline
DAPI & Nucleus        & $0.008$ & $0.15$ & $17\%$    & $0.006$ & $0.04$ & $12\%$   & $0.006$ & $0.05$ & $21\%$  \\  
HER2 & Membrane  &  $0.000$ & $\infty$& $61\%$    &  $0.023$ & $\infty$ & $60\%$ &  $0.010$ & $6.75$ & $67\%$ \\ 
ER & Nucleus           &  $0.012$ & $0.14$ & $51\%$   &  $0.012$ & $1.11$ & $45\%$ &  $0.003$ & $0.07$ & $50\%$ \\ 
 \hline
 \end{tabular}
 \caption{Maximum variance reduction possible for each stain and corresponding parameters for BCA data set.}
  \label{table_BCA_params}
\end{table}

In Figure \ref{fig_breast_spatial_staining_patterns}, we plot the \emph{observed}, \emph{baseline} and \emph{cell impact} spatial staining patterns for the \DAPI, \HER and \ER stains for sample 1. Additionally, we also plot the solution to equation \eqref{eq_gen_chem_prof} using the $\vec{p}_*$ parameter set (without removing any cells). This allows us to visualise the hypothesised SF. Plots relating to the same stain use the same scaled colour bar. The general trend one finds is that the \emph{observed} data set is very heterogeneous with regards to staining intensity, the \emph{baseline} data set looks like a smoothed version of the \emph{observed} data set, and the \emph{cell impact} data set shows variations around the \emph{baseline}.

\begin{figure}[t]
    \begin{overpic}[width=0.23\textwidth]{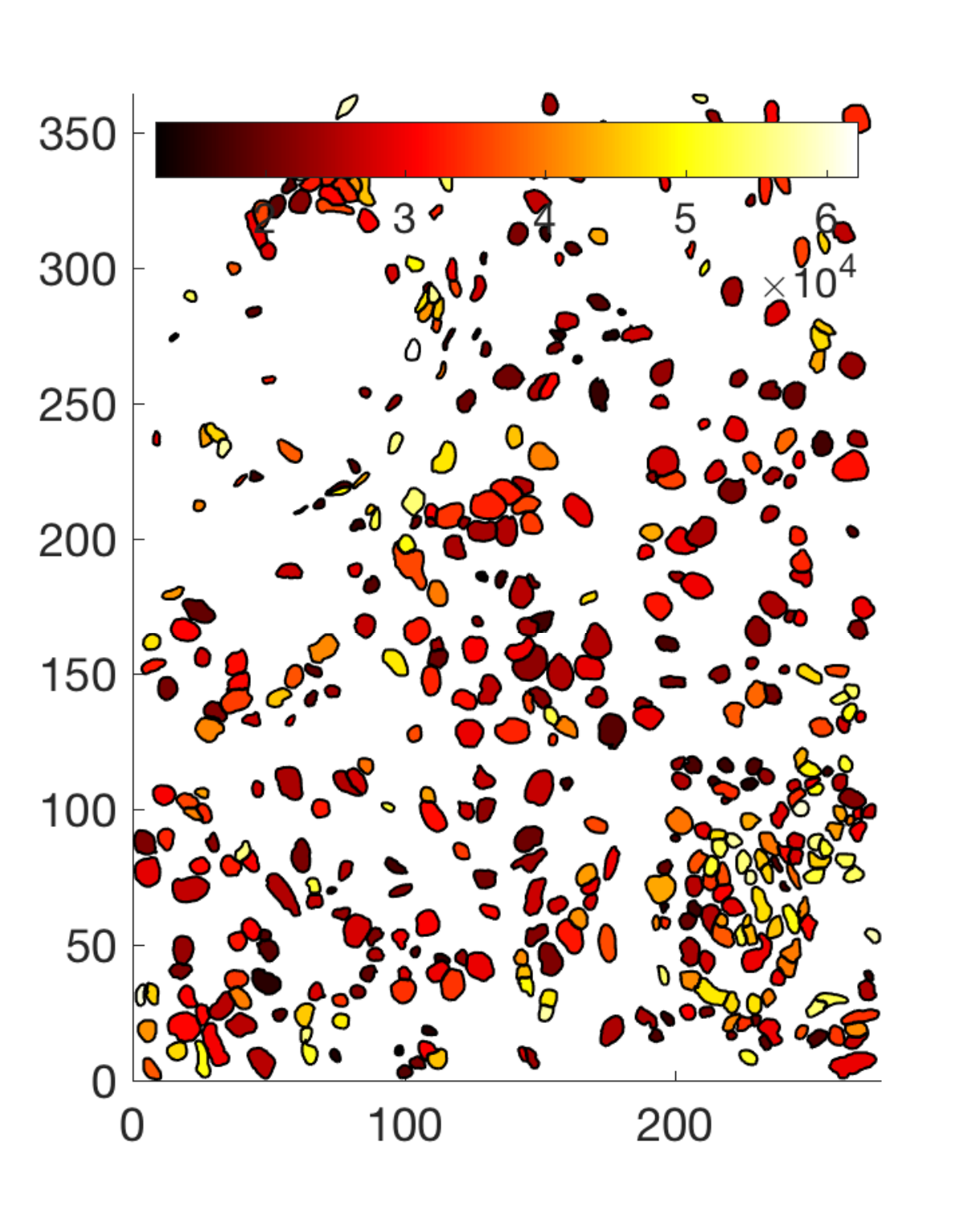}
    \put(15,95){\footnotesize Observed intensities}
       \put(-2,95){\footnotesize a.)}
    \put(37,2){\tiny $x$ ($\mu\text{m}$)}
    \put(-1,42){\tiny \rotatebox{90}{$y$ ($\mu\text{m}$)}}
    \put(-10,41){\footnotesize \rotatebox{90}{DAPI}}
    \end{overpic} %
        \begin{overpic}[width=0.23\textwidth]{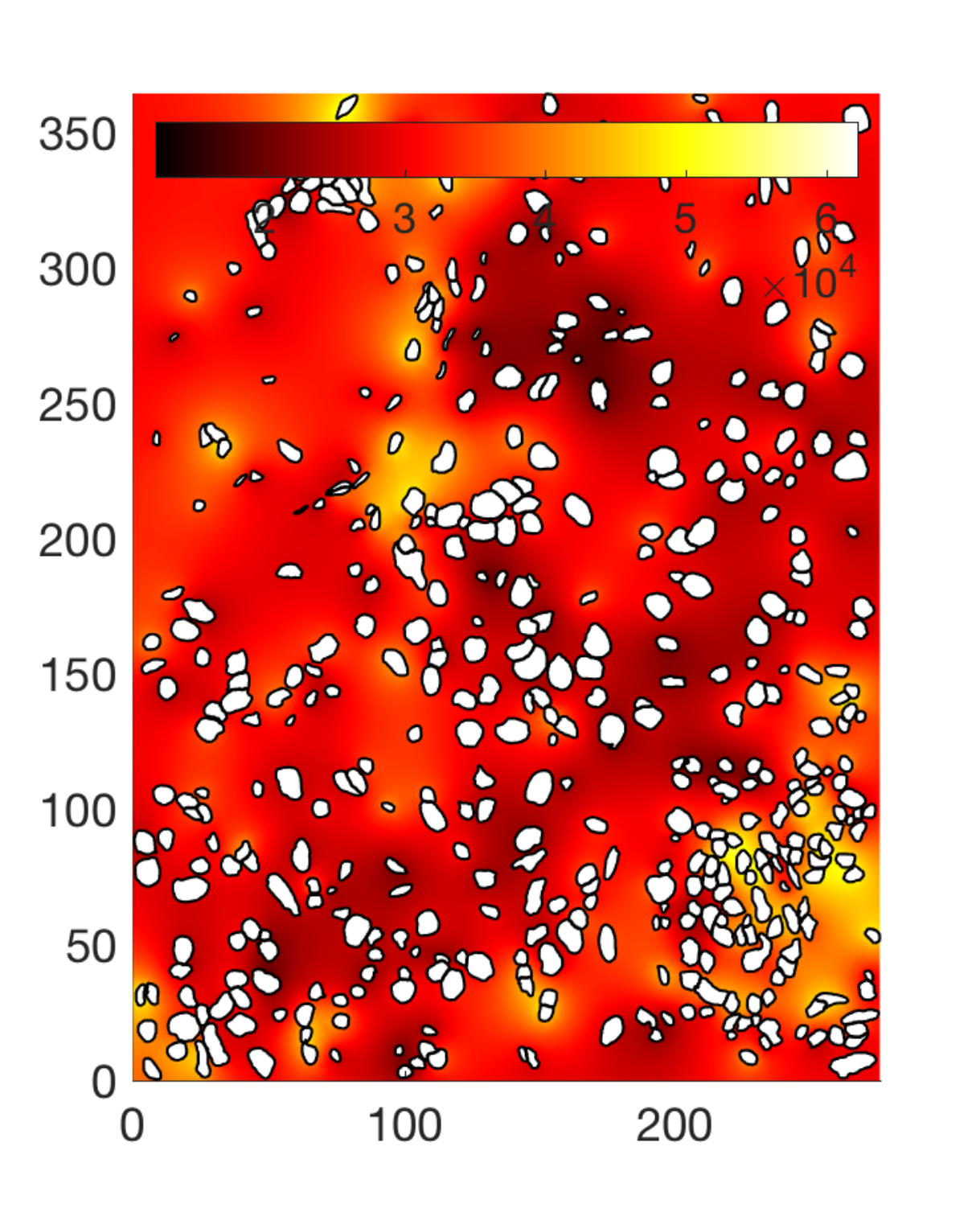}
    \put(22,95){\footnotesize Signalling Field}
       \put(-2,95){\footnotesize b.)}
    \put(37,2){\tiny $x$ ($\mu\text{m}$)}
    \put(-1,42){\tiny \rotatebox{90}{$y$ ($\mu\text{m}$)}}
    \end{overpic}
    \begin{overpic}[width=0.23\textwidth]{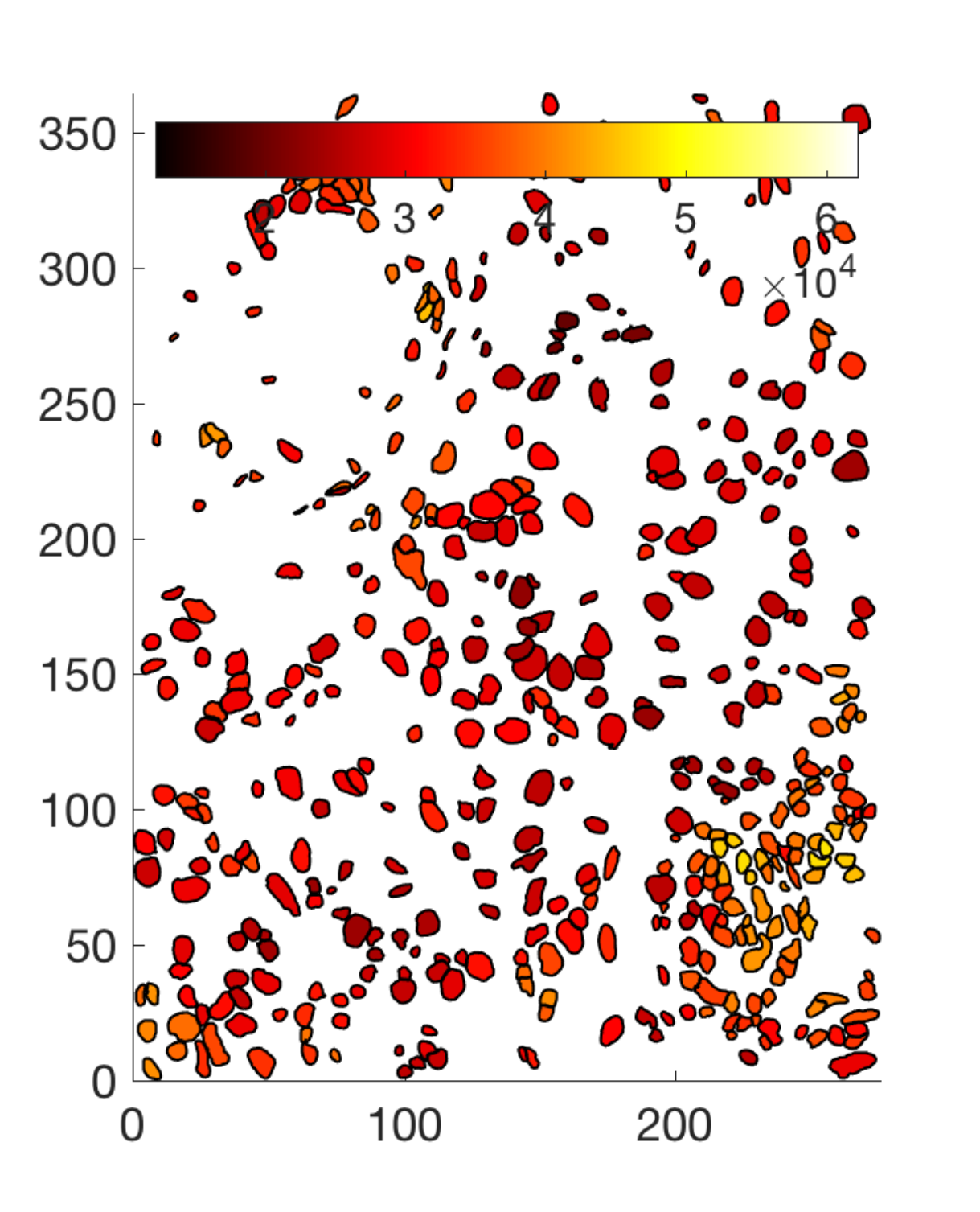}
        \put(17,95){\footnotesize Baseline intensities}
        \put(-2,95){\footnotesize c.)}
    \put(37,2){\tiny $x$ ($\mu\text{m}$)}
    \put(-1,42){\tiny \rotatebox{90}{$y$ ($\mu\text{m}$)}}
    \end{overpic}  
    \begin{overpic}[width=0.23\textwidth]{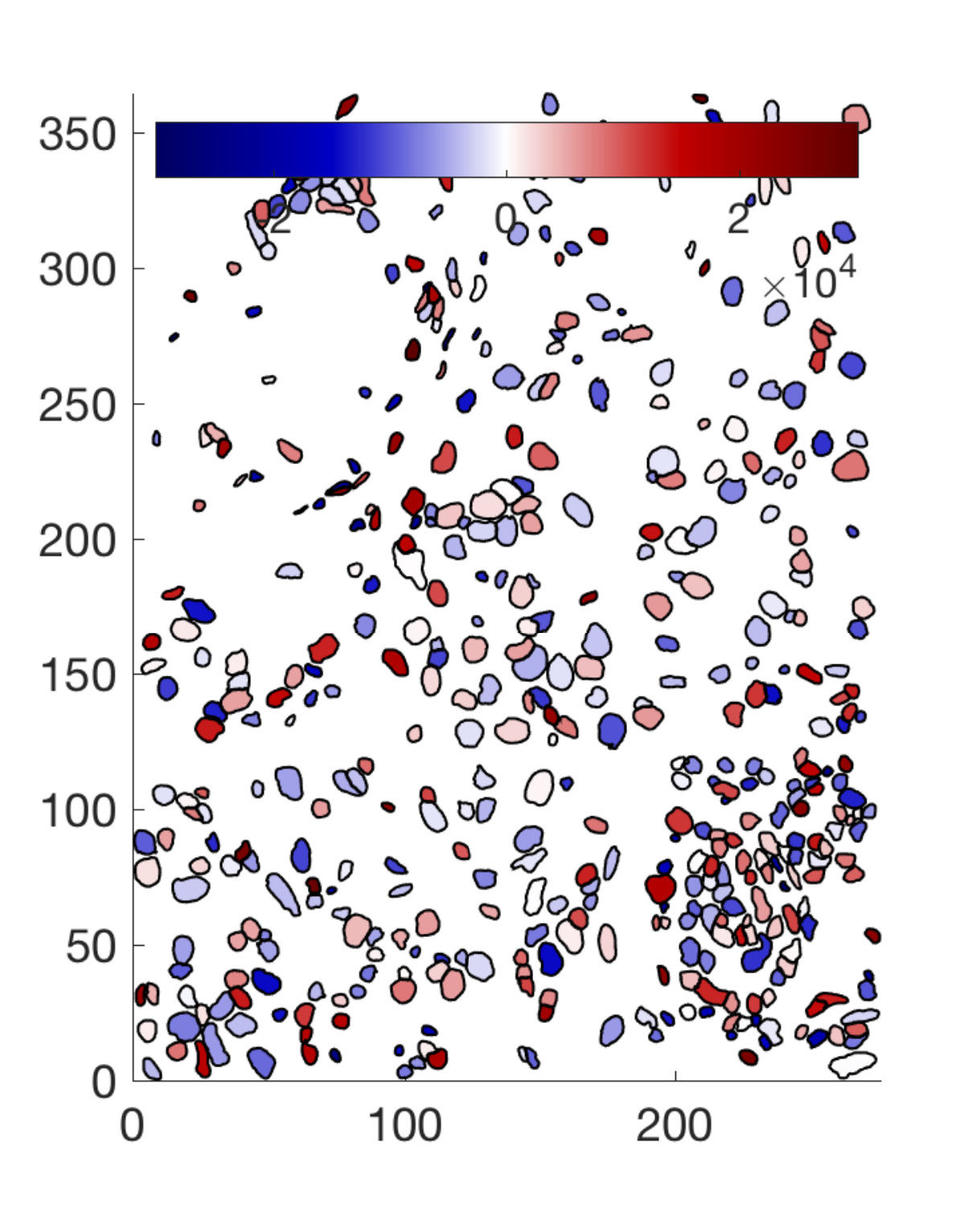}
     \put(27,95){\footnotesize Cell impact}
      \put(-2,95){\footnotesize d.)}
    \put(37,2){\tiny $x$ ($\mu\text{m}$)}
    \put(-1,42){\tiny \rotatebox{90}{$y$ ($\mu\text{m}$)}}
    \end{overpic} 
    \\
    
        \begin{overpic}[width=0.23\textwidth]{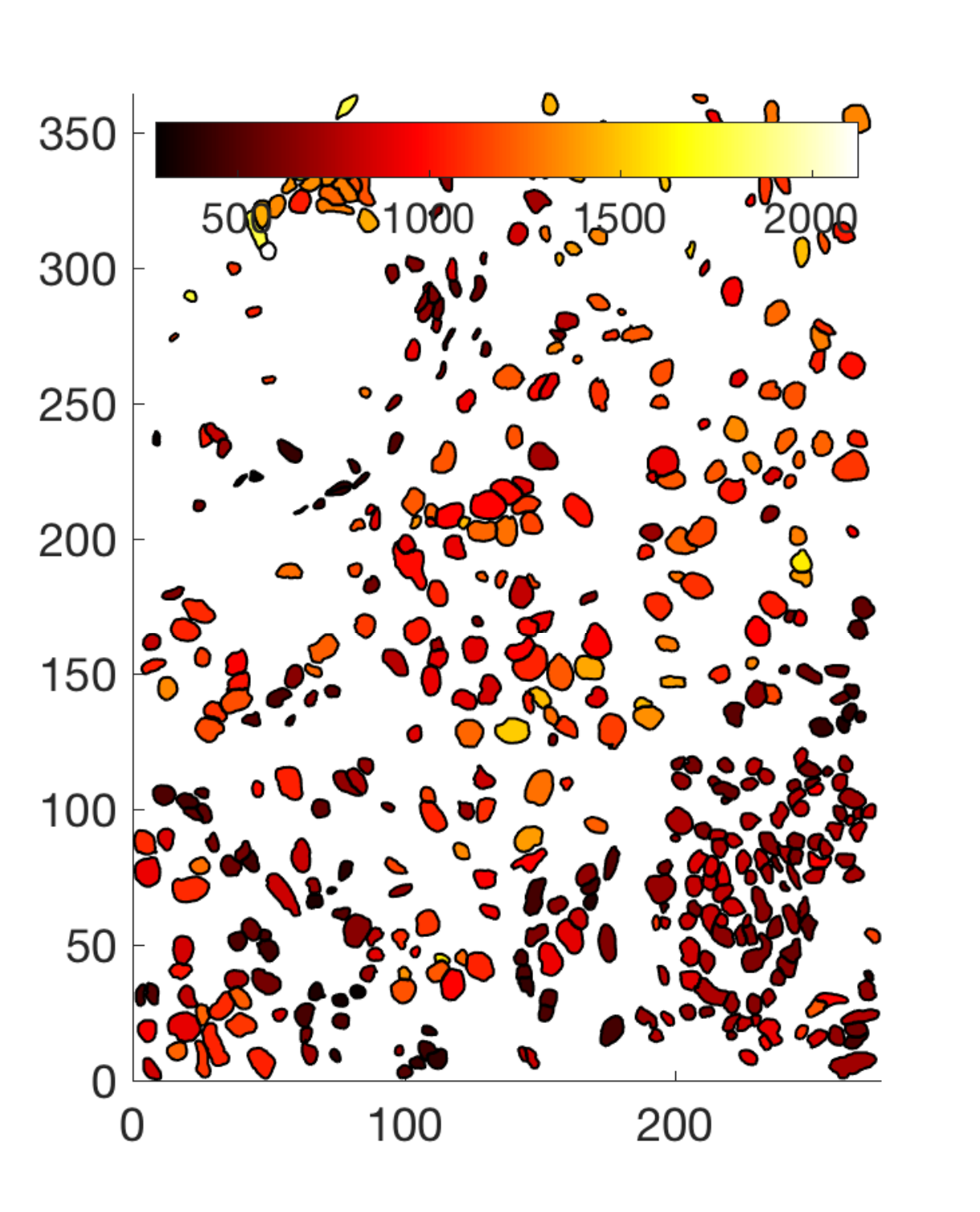}
       \put(-2,95){\footnotesize e.)}
    \put(37,2){\tiny $x$ ($\mu\text{m}$)}
    \put(-1,42){\tiny \rotatebox{90}{$y$ ($\mu\text{m}$)}}
     \put(-10,41){\footnotesize \rotatebox{90}{HER2}}
    \end{overpic} %
        \begin{overpic}[width=0.23\textwidth]{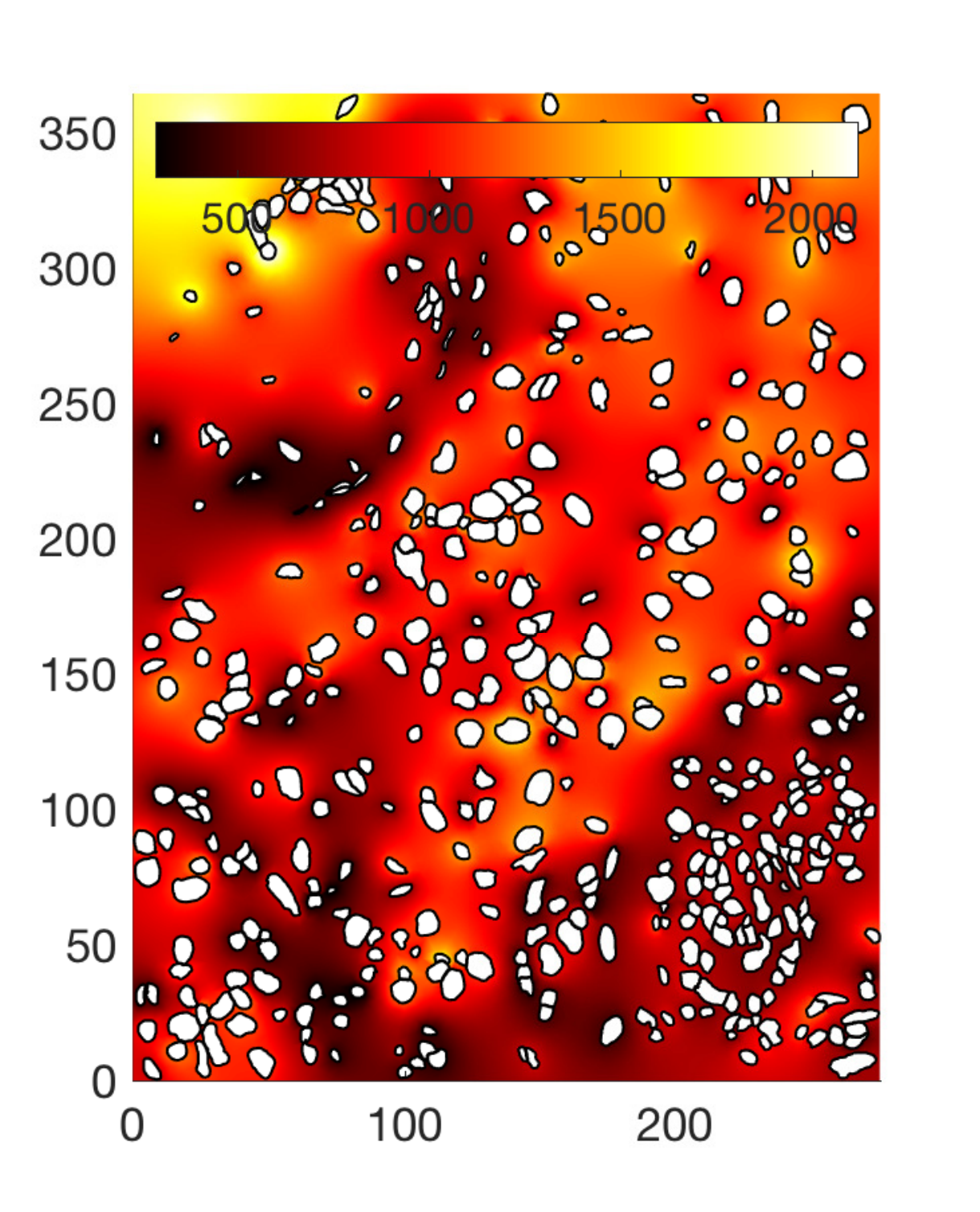}
       \put(-2,95){\footnotesize f.)}
    \put(37,2){\tiny $x$ ($\mu\text{m}$)}
    \put(-1,42){\tiny \rotatebox{90}{$y$ ($\mu\text{m}$)}}
    \end{overpic} 
    \begin{overpic}[width=0.23\textwidth]{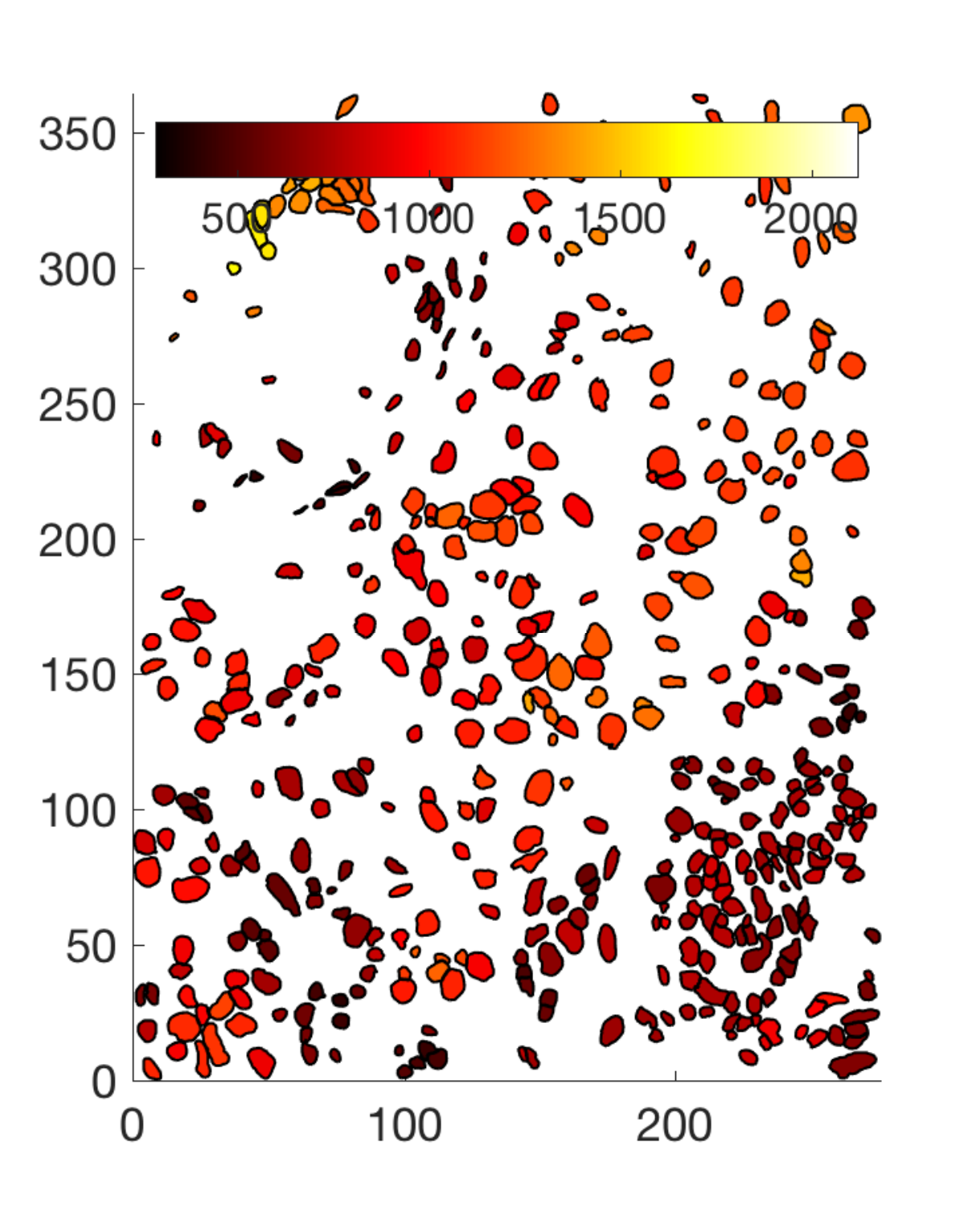}
        \put(-2,95){\footnotesize g.)}
    \put(37,2){\tiny $x$ ($\mu\text{m}$)}
    \put(-1,42){\tiny \rotatebox{90}{$y$ ($\mu\text{m}$)}}
    \end{overpic}  
    \begin{overpic}[width=0.23\textwidth]{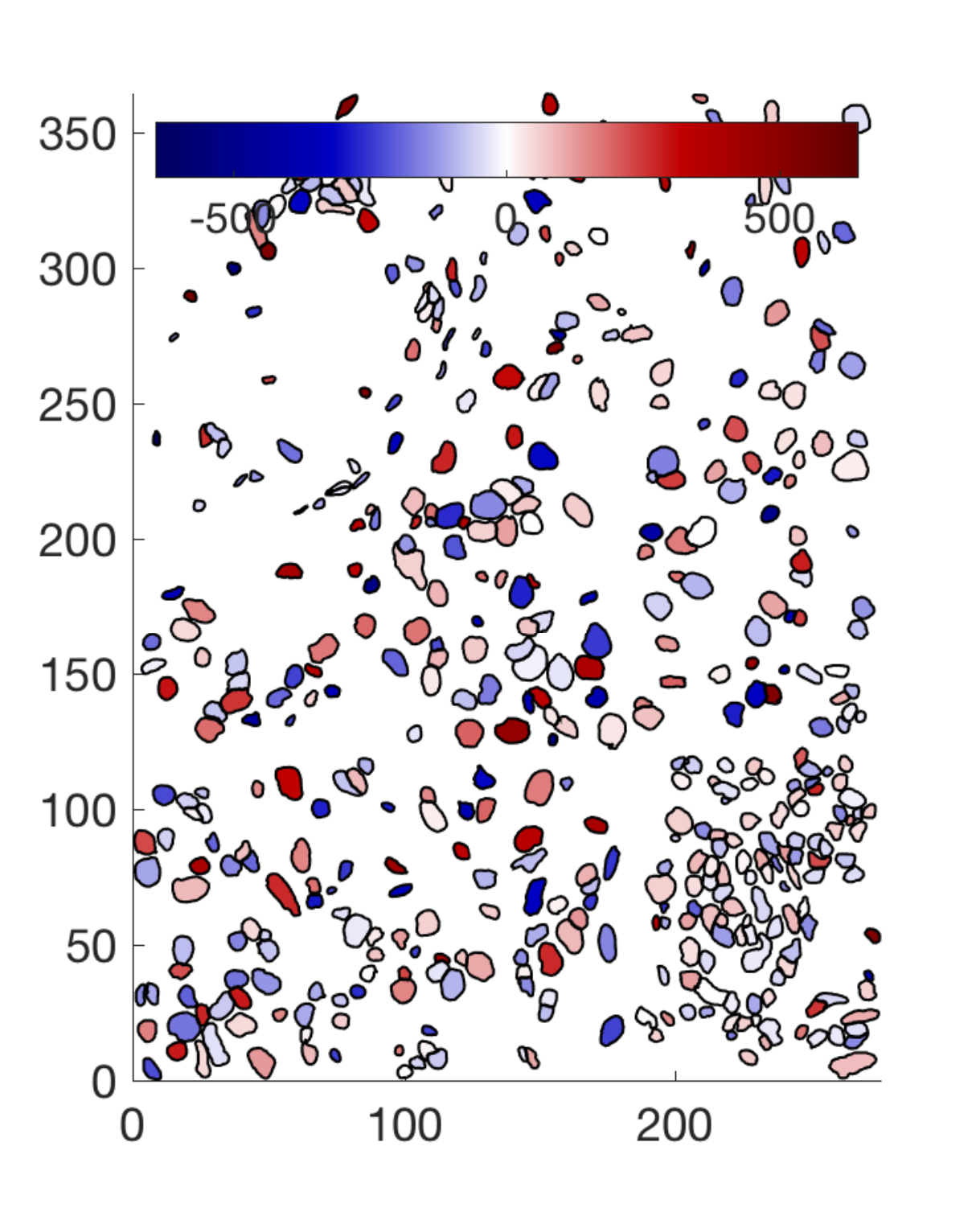}
      \put(-2,95){\footnotesize h.)}
    \put(37,2){\tiny $x$ ($\mu\text{m}$)}
    \put(-1,42){\tiny \rotatebox{90}{$y$ ($\mu\text{m}$)}}
    \end{overpic} 
 \\
    
      \begin{overpic}[width=0.23\textwidth]{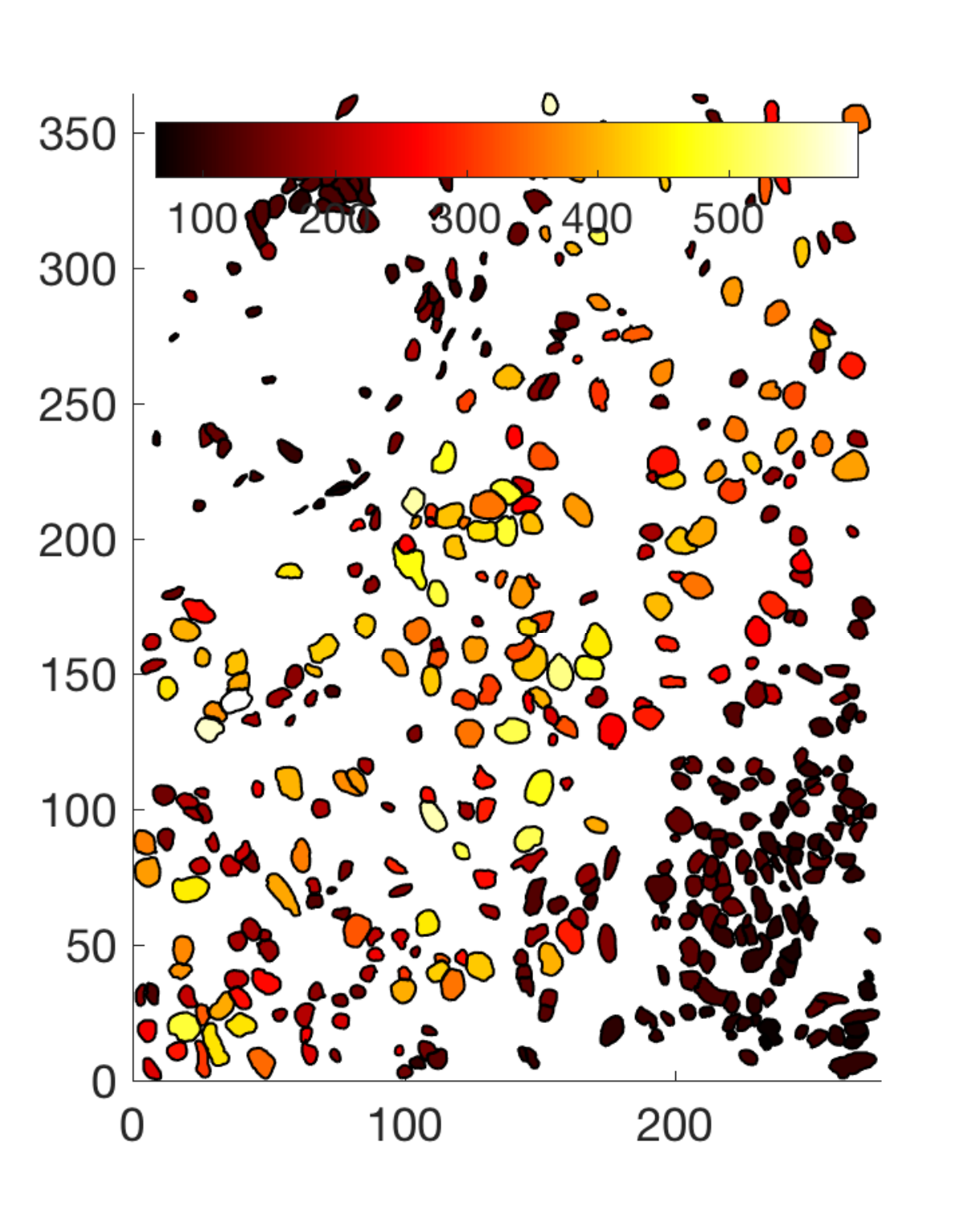}
        \put(-2,95){\footnotesize i.)}
    \put(37,2){\tiny $x$ ($\mu\text{m}$)}
    \put(-1,42){\tiny \rotatebox{90}{$y$ ($\mu\text{m}$)}}
     \put(-10,46){\footnotesize \rotatebox{90}{ER}}
    \end{overpic} %
        \begin{overpic}[width=0.23\textwidth]{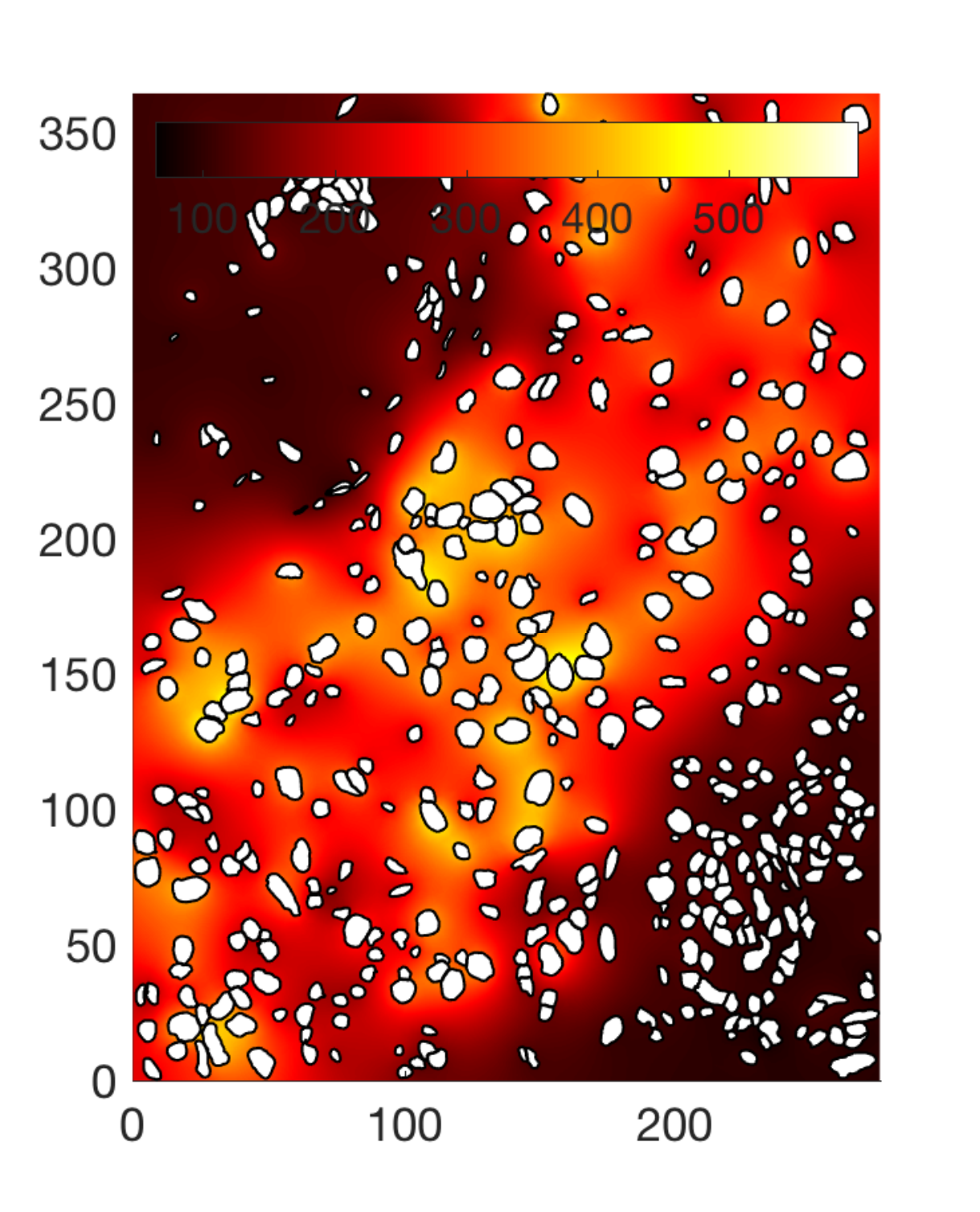}
       \put(-2,95){\footnotesize j.)}
    \put(37,2){\tiny $x$ ($\mu\text{m}$)}
    \put(-1,42){\tiny \rotatebox{90}{$y$ ($\mu\text{m}$)}}
    \end{overpic} 
    \begin{overpic}[width=0.23\textwidth]{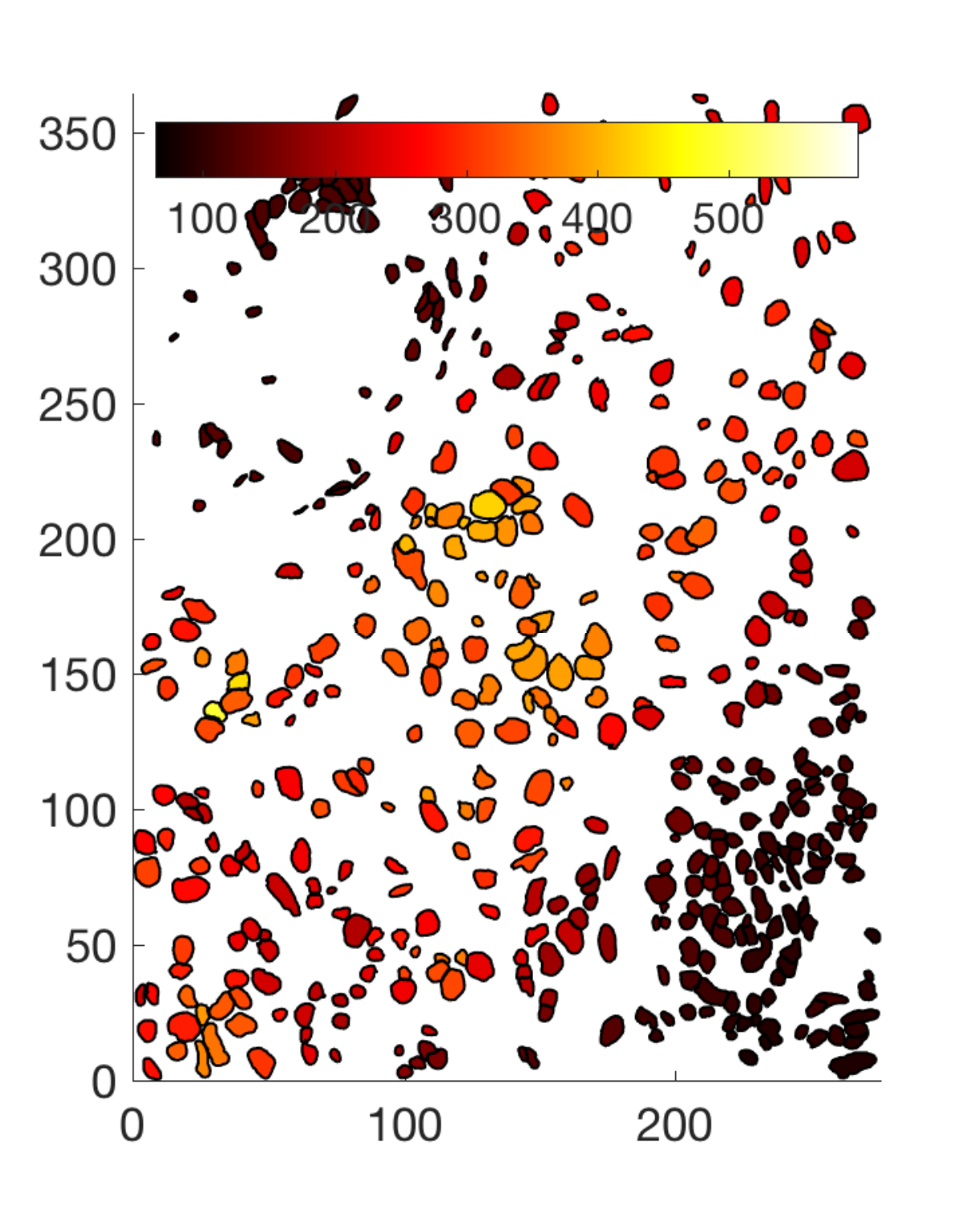}
        \put(-2,95){\footnotesize k.)}
    \put(37,2){\tiny $x$ ($\mu\text{m}$)}
    \put(-1,42){\tiny \rotatebox{90}{$y$ ($\mu\text{m}$)}}
    \end{overpic}  
    \begin{overpic}[width=0.23\textwidth]{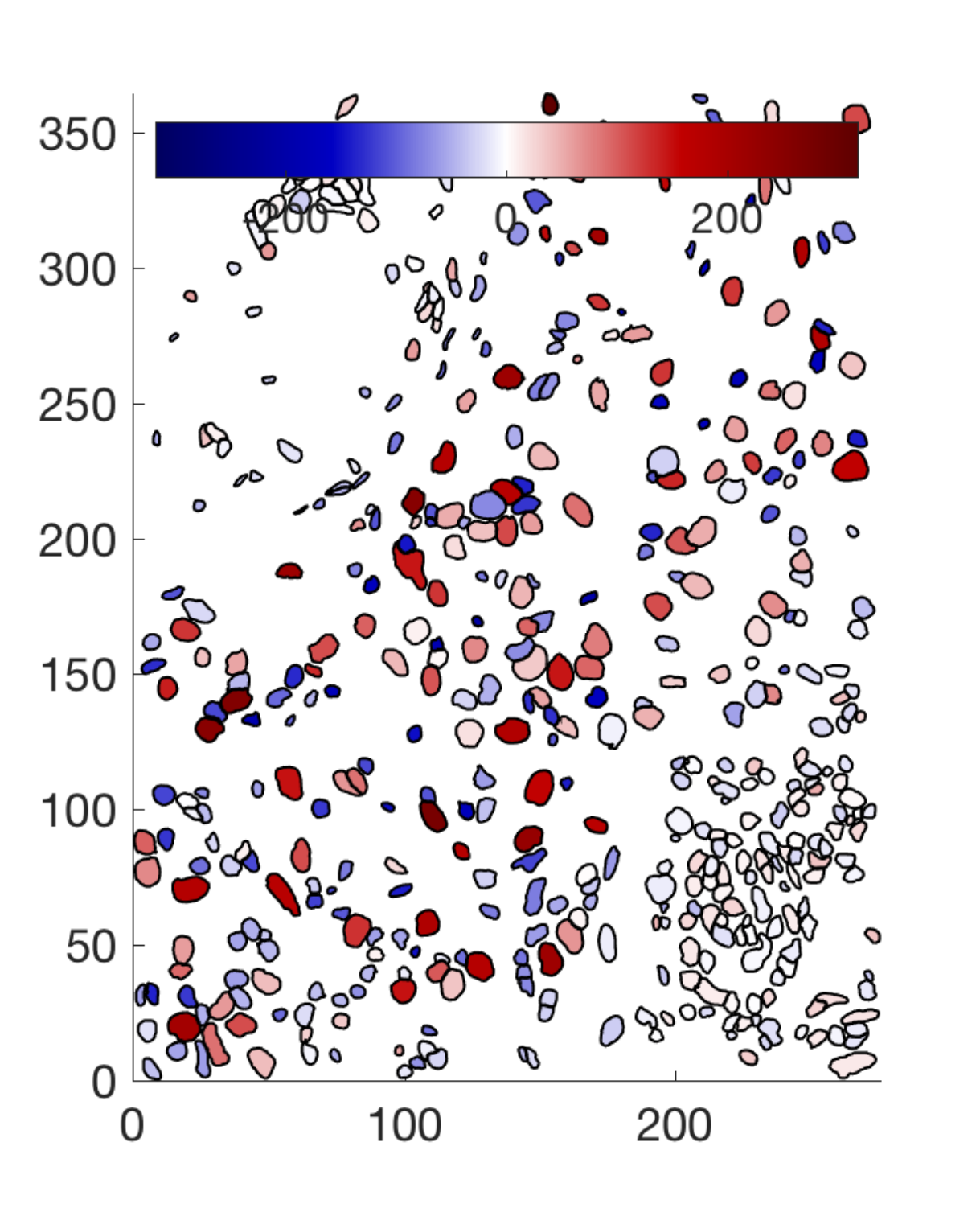}
      \put(-2,95){\footnotesize l.)}
    \put(37,2){\tiny $x$ ($\mu\text{m}$)}
    \put(-1,42){\tiny \rotatebox{90}{$y$ ($\mu\text{m}$)}}
    \end{overpic} 

    \caption{\footnotesize{Pathology slide with outlines plotted. Figures showing stain intensity for: (left column) \emph{observed} intensities $\cvec$; (left centre column) the exterior signalling field --- the solution to equation \eqref{eq_gen_chem_prof}; (right centre column) \emph{baseline} intensities $\bvec$; (right column) \emph{cell impact} $\fvec$ --- for this column red colours correspond to secreting cells, blue colours correspond to absorbing cells. The rows correspond to: (top row) \DAPI stain; (middle row) \HER stain; and (bottom row) \ER stain.}}
  \label{fig_breast_spatial_staining_patterns}
\end{figure}

\subsection{Possible Identification of Microenvironmental Niches}

For Sample 1, ignoring the \DAPI stain, we plot the \emph{baseline} \HER and \emph{baseline} \ER data sets as a scatter graph in Figure \ref{fig_breast_types}(a). One observes approximately 3 clusters of points (which can be found using $k$-means clustering). When looking at the \emph{observed} staining intensities, these clusters of points appear as less distinct entities, see Figure \ref{fig_breast_types}(b). An interpretation of this finding is that there are 3 distinct microenvironmental niches on the slide. Viewing these clusters spatially, one can see that cells with approximately the same microenvironment are located in adjacent locations, see Figure \ref{fig_breast_types}(c). Samples 2 and 3 seemed to also have niches, but without such immediately identifiable groups.

\begin{SCfigure}
\centering
     \begin{overpic}[width=0.28\textwidth]{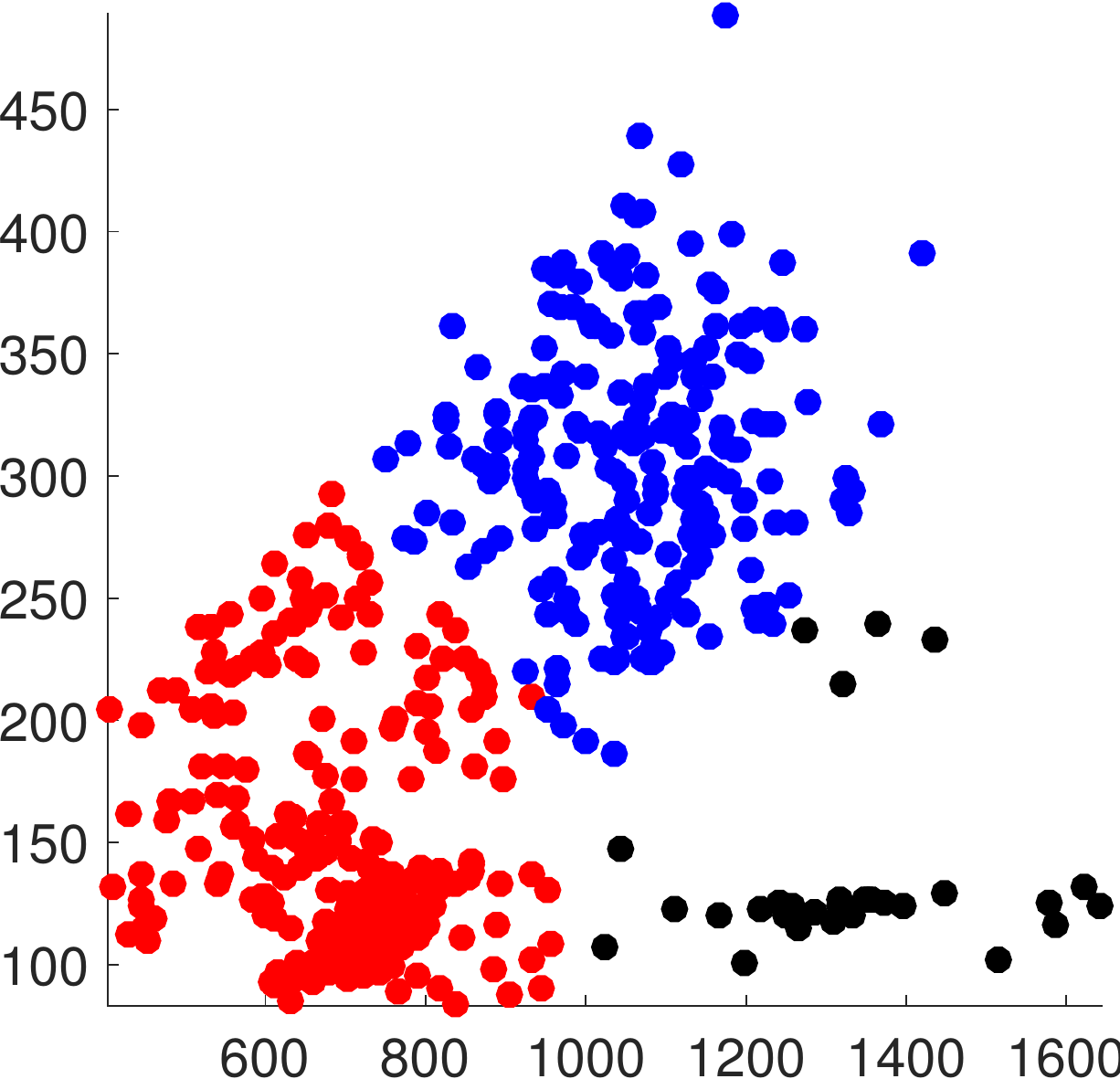}
       \put(-2,98){\footnotesize a.)}
       \put(27,98){\footnotesize Baseline intensities}
    \put(48,-6){\footnotesize HER2}
    \put(-6,48){\footnotesize \rotatebox{90}{ER}}
    \end{overpic} %
    \hspace{.2 cm}
     \begin{overpic}[width=0.28\textwidth]{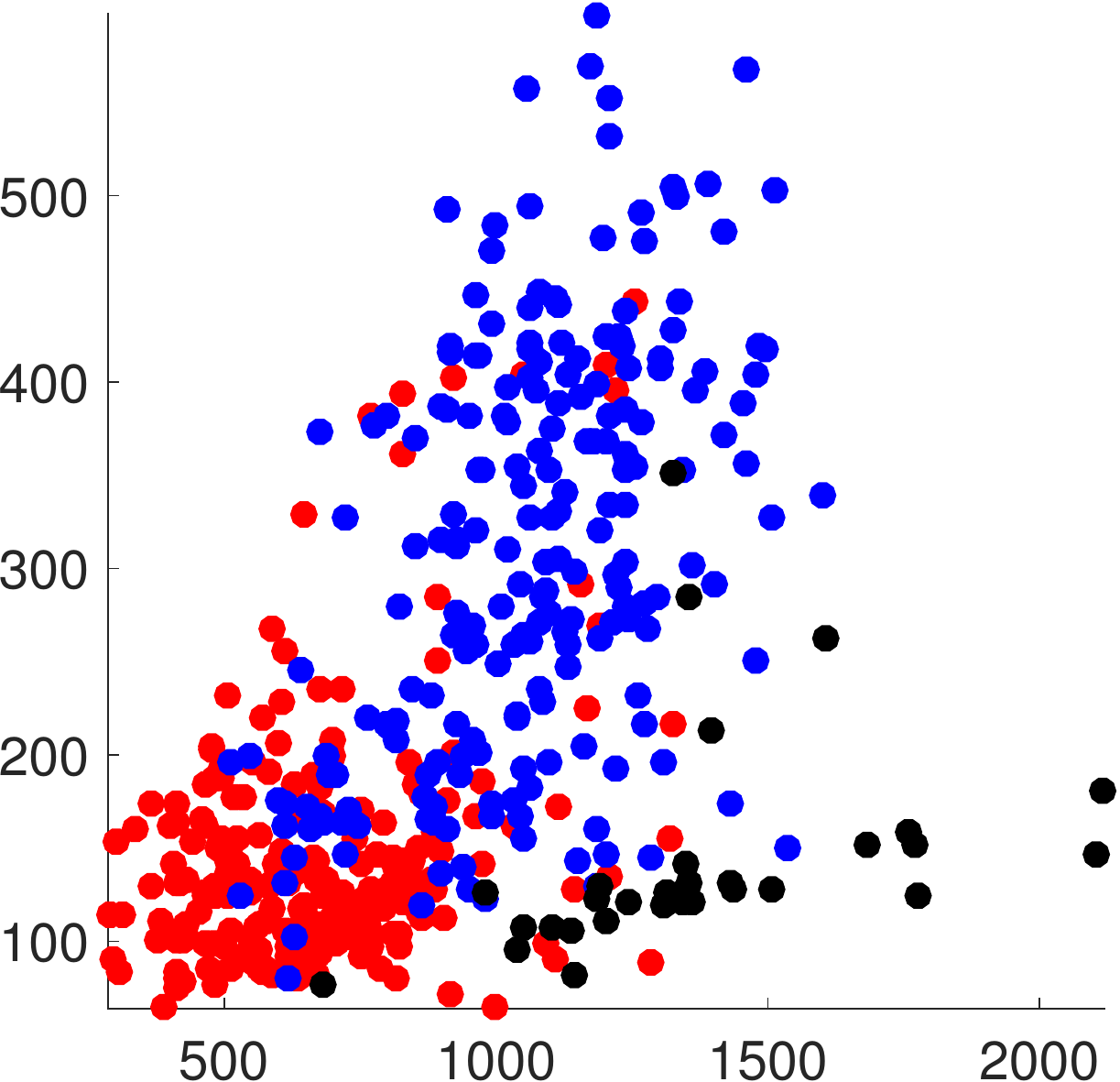}
       \put(-2,98){\footnotesize b.)}
       \put(25,98){\footnotesize Observed intensities}
    \put(48,-6){\footnotesize HER2}
    \put(-6,48){\footnotesize \rotatebox{90}{ER}}
    \end{overpic} 
     \begin{overpic}[width=0.24\textwidth]{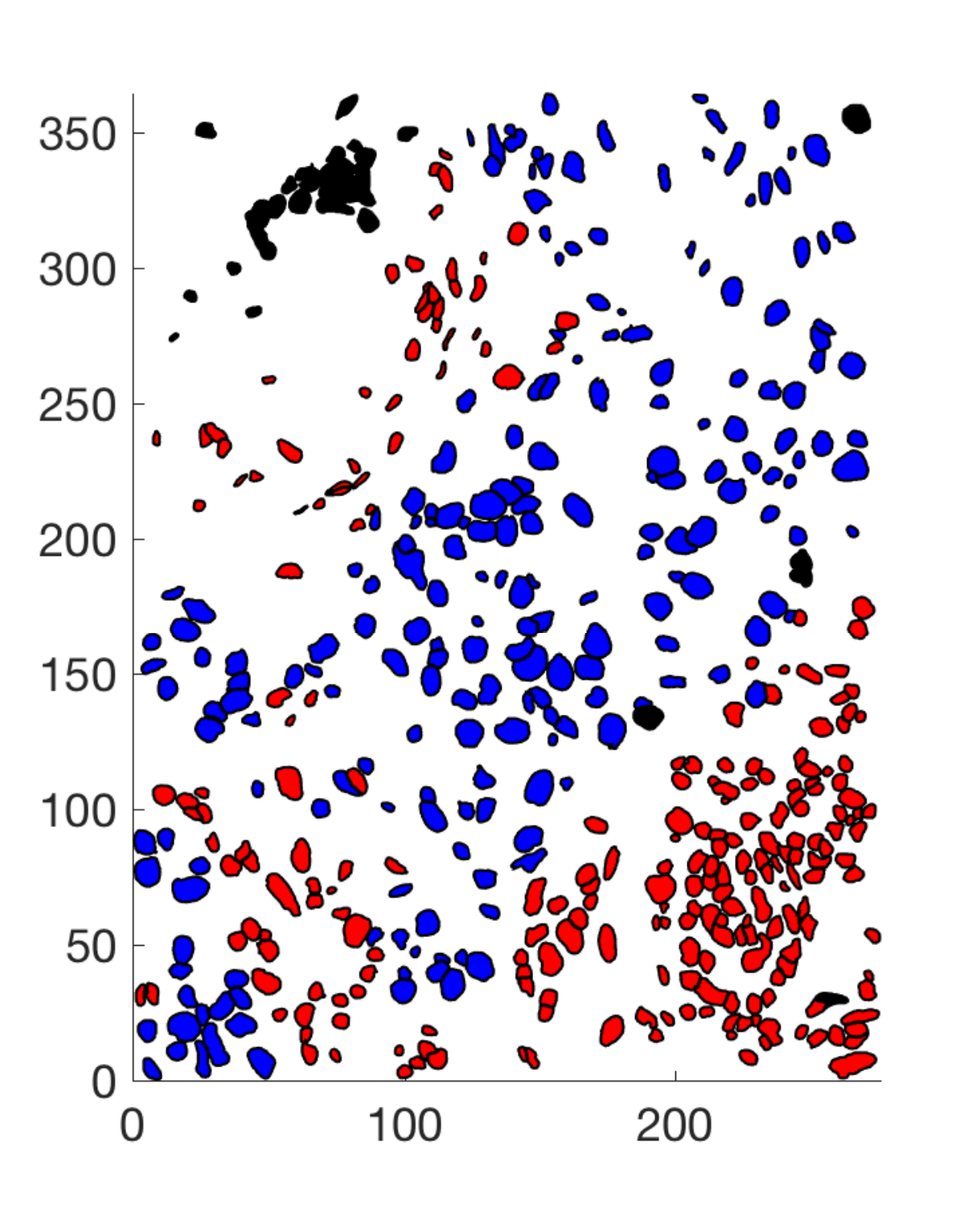}
       \put(-4,93){\footnotesize c.)}
       \put(34,2){\footnotesize $x$ ($\mu\text{m}$)}
    \put(-3,42){\footnotesize \rotatebox{90}{$y$ ($\mu\text{m}$)}}
    \end{overpic} 
    \vspace{6mm}
\caption{\footnotesize{Figure to show possible existence of microenvironmental niches. a.) $k$-means clustering applied to \HER and \ER \emph{baseline} intensities; b.) The labelling from the $k$-means clustering shown on the \HER and \ER \emph{observed} data sets; and c.) the $k$-means clustering labels applied spatially.}}
  \label{fig_breast_types}
\end{SCfigure}

\subsection{Use of Centromeric Probes to Investigate \HER Gene Amplification Status}

Using the FISH experimental procedure, a \HER gene specific probe was used in conjunction with a centromeric probe to identify the copy number of the \HER gene for each cell. We write the copy number of cell $i$ as
\begin{equation}
\cn_i = \frac{\text{BAC}}{\text{CEP}} = \frac{\text{\# Number of \HER gene probes detected}}{\text{\# Number of centromeres detected}} \, .
\end{equation}
As the data is noisy, we discard data points that are biologically unreasonable, i.e., correspond to a amplification ratio that is infinite ($\cn_i = \infty$), or less than a single specific probe detected per centromere ($\cn_i < 1$). We then put the data into two ordinal sets, either unamplified wildtype (WT) cells where the \HER gene has not been upregulated ($\cn_i =  1$), or where the cell has amplified the \HER gene ($\cn_i > 1$). For the 3 data sets, after invalid data has been discarded, amplified \HER cells account for $26\%$, $9\%$ and $17\%$ (respectively) of the cells on the slide. 

For the first data set, using a logistic regression, one can use the measured \HER staining intensities $\cvec$ to predict whether the \HER gene is amplified $\cnvec = \{\cn_1,\dots,\cn_N\}$. This logistic regression is correct with $66\%$ accuracy. Analysing the binary classified data sets for staining intensity (either low or high staining intensity with the threshold determined by the logistic regression), we find that for high staining intensity cells tend to be net secretors of the SF ($\fv_i > 0$), and cells with low staining intensity tend to be net absorbers of the SF ($\fv_i < 0$). 

However, when looking at cells with low \HER staining intensity, but with amplified \HER gene, we find that the \emph{cell impact} for the \HER stain is greater than the mean for all low \HER stain intensity cells (see Figure \ref{fig_BCA_Hists}). By use of $z$-tests (used for $n>30$ samples), we reject the null hypothesis that: \emph{the mean \emph{cell impact} for cells with low \HER stain intensity and amplified \HER gene is the same as the population mean \emph{cell impact} for all cells with low \HER stain intensity}; compared to the alternative hypothesis that the means are not the same. This is found to be statistically significant with $p<0.05$. The interpretation of this is that when cells have a low staining intensity, but with an amplified \HER gene, the cells may not necessarily stain brightly but are net secretors into the SF, (and therefore they likely stain \emph{comparatively} more intensely than their immediate neighbours). 

Additionally, it should be noted for cells with low staining intensity: for amplified \HER copy number, cells are net producers of the SF impacting \ER intensity; and for unamplified \HER cells, cells are net absorbers of the SF impacting \ER intensity (for both cases using $z$-tests, $p<0.05$).

\begin{SCfigure}
  \centering
  \hspace{0.035\textwidth}
    \begin{overpic}[width=0.725\textwidth]{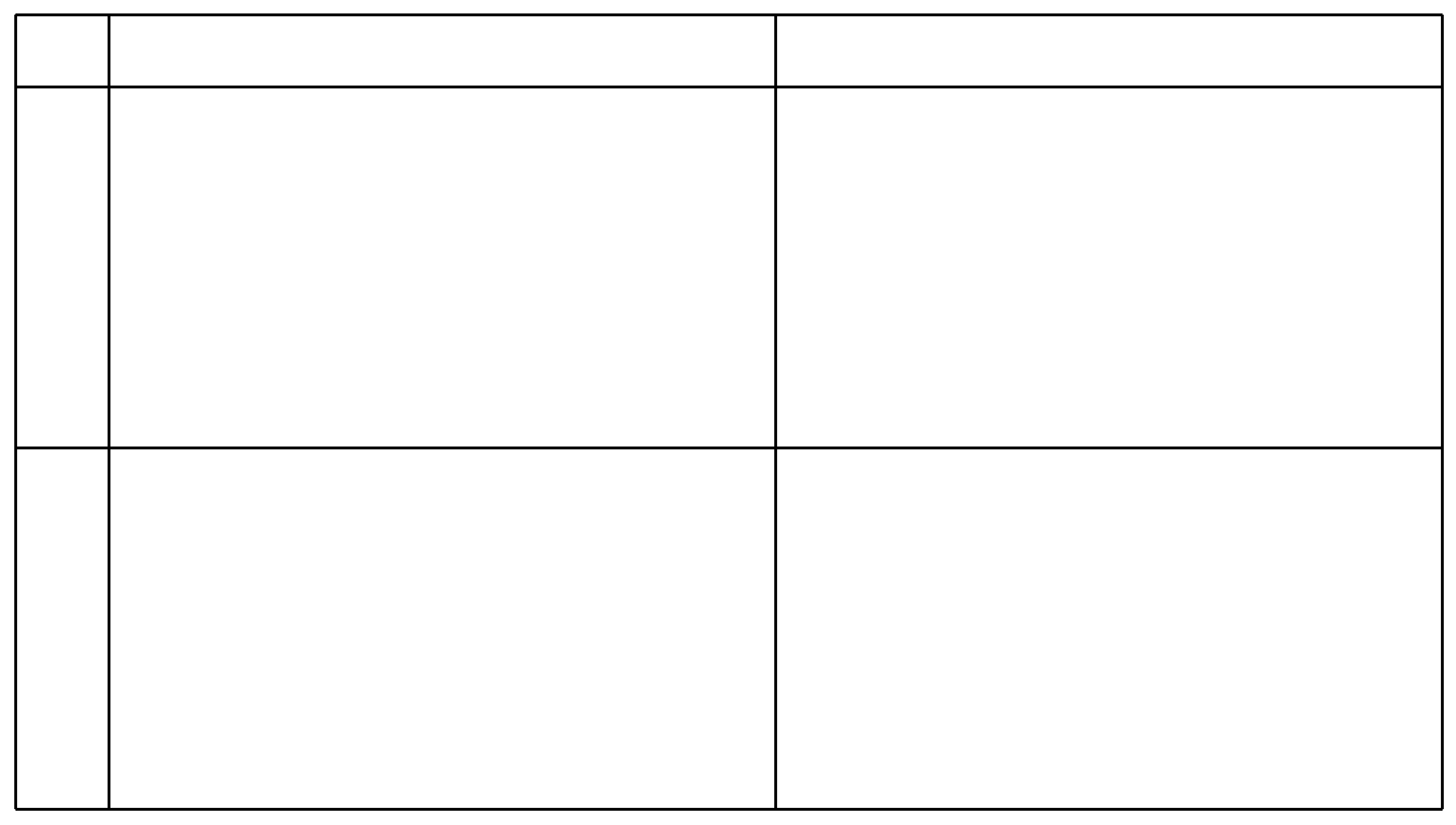}
    \put(-4,15){\Large \rotatebox{90}{Copy Number}}
       \put(3.5,6){\rotatebox{90}{Unamplified}}
       \put(3.5,33){\rotatebox{90}{Amplified}}
    \put(35,58){\Large \HER Stain Intensity}
     \put(28,52.5){Low}
      \put(75,52.5){High}
    \put(14,2){\includegraphics[width=0.26\textwidth]{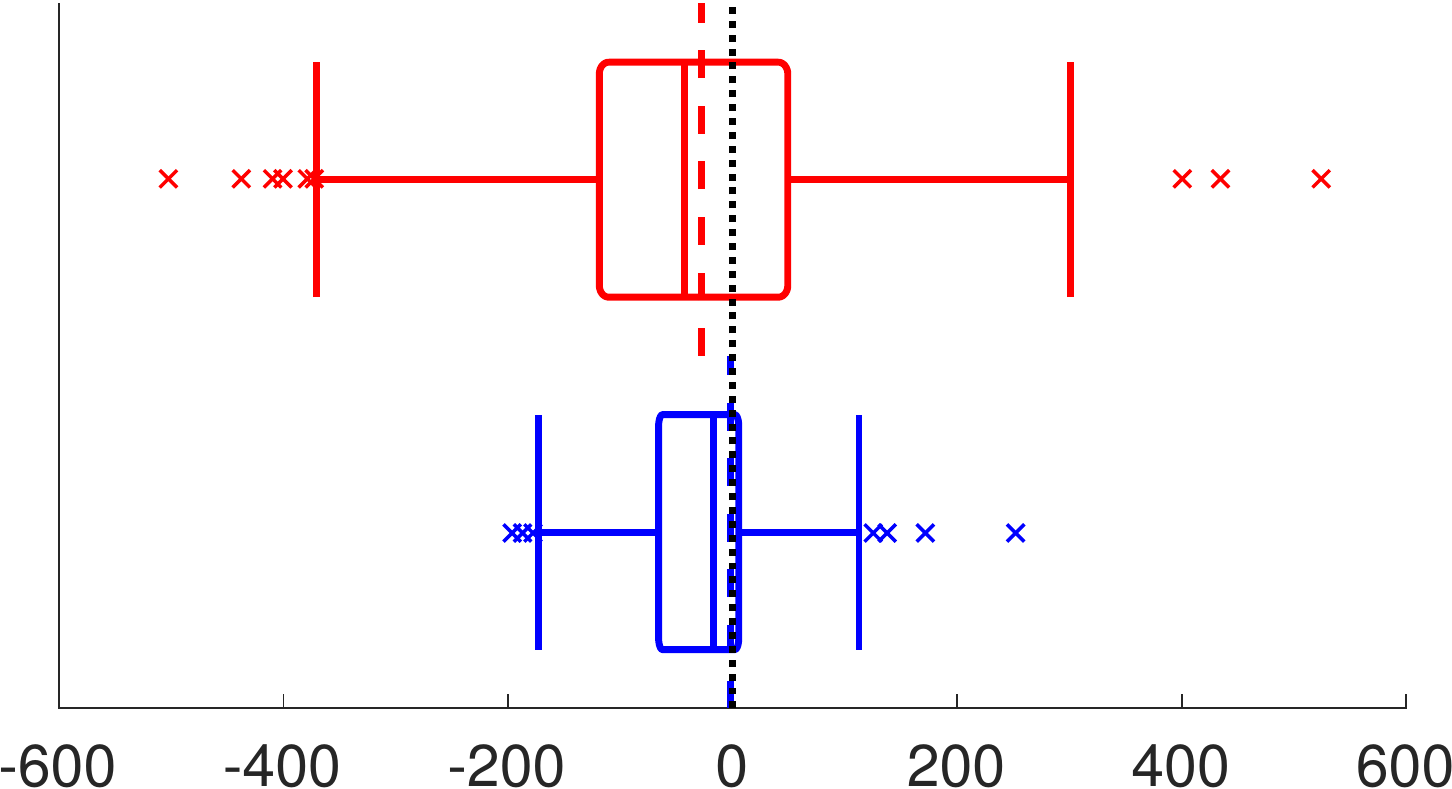}}
         \put(40,23){\small $n = 138$}
         \put(12,13.75){\small \rotatebox{90}{\HER}}
         \put(12,6.75){\small \rotatebox{90}{\ER}}
         \put(49,7.75){($*$)}
    \put(60,2){\includegraphics[width=0.26\textwidth]{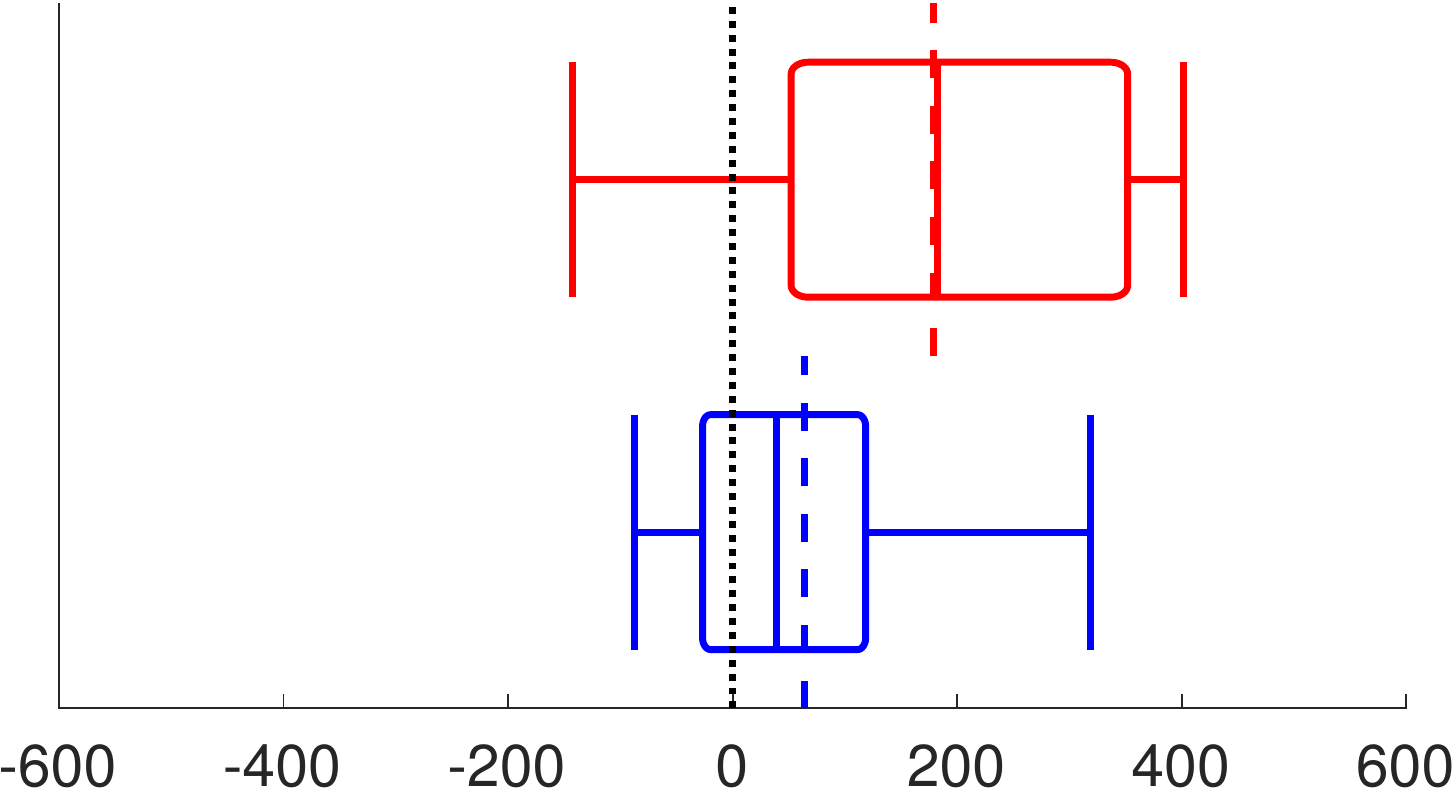}}
          \put(86,23){\small $n = 16$}
          \put(58,13.75){\small \rotatebox{90}{\HER}}
         \put(58,6.75){\small \rotatebox{90}{\ER}}
    \put(14,27){\includegraphics[width=0.26\textwidth]{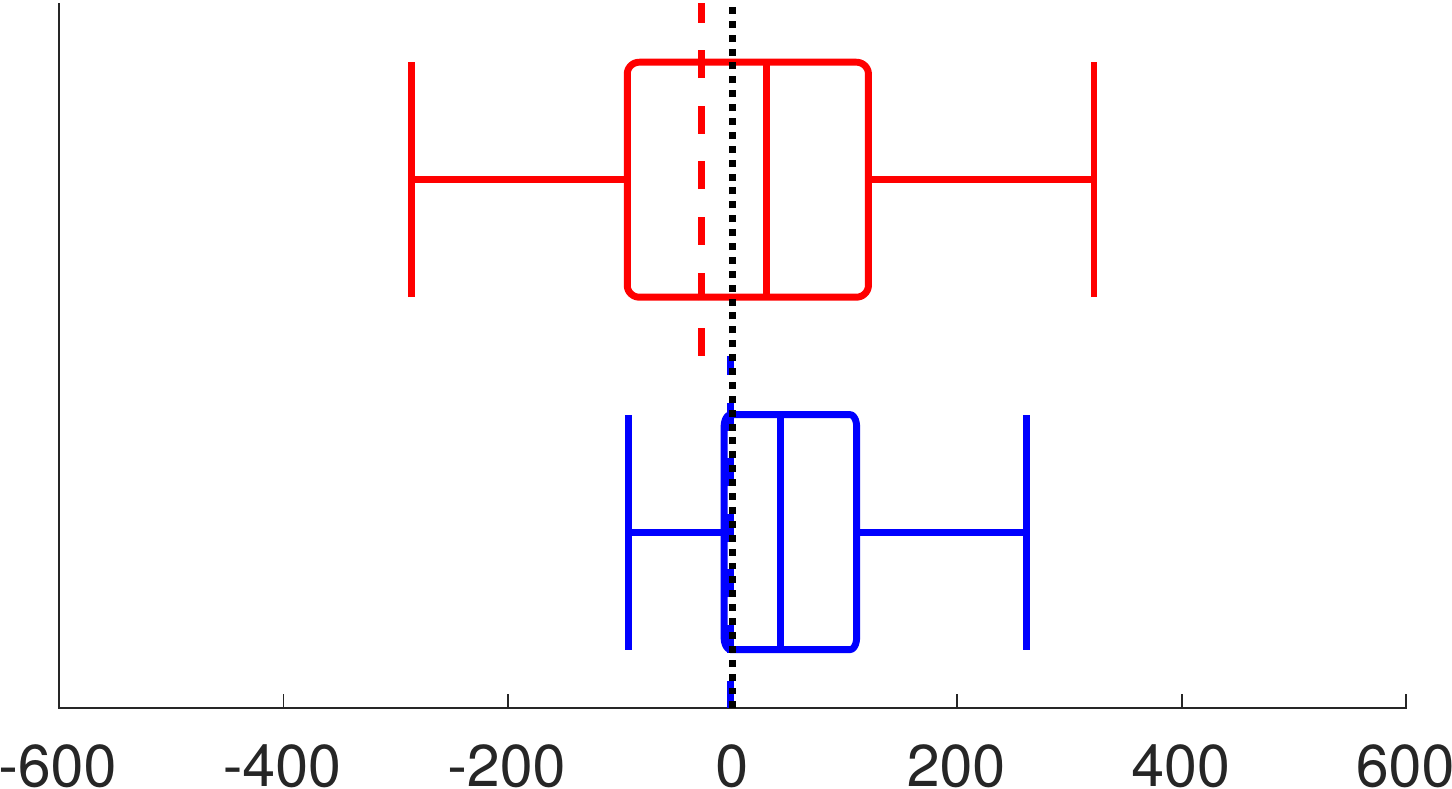}}
    	 \put(40,48){\small $n = 63$}
         \put(12,38.75){\small \rotatebox{90}{\HER}}
         \put(12,31.75){\small \rotatebox{90}{\ER}}
                  \put(49,41.75){($*$)}
                  \put(49,32.75){($*$)}
    \put(60,27){\includegraphics[width=0.26\textwidth]{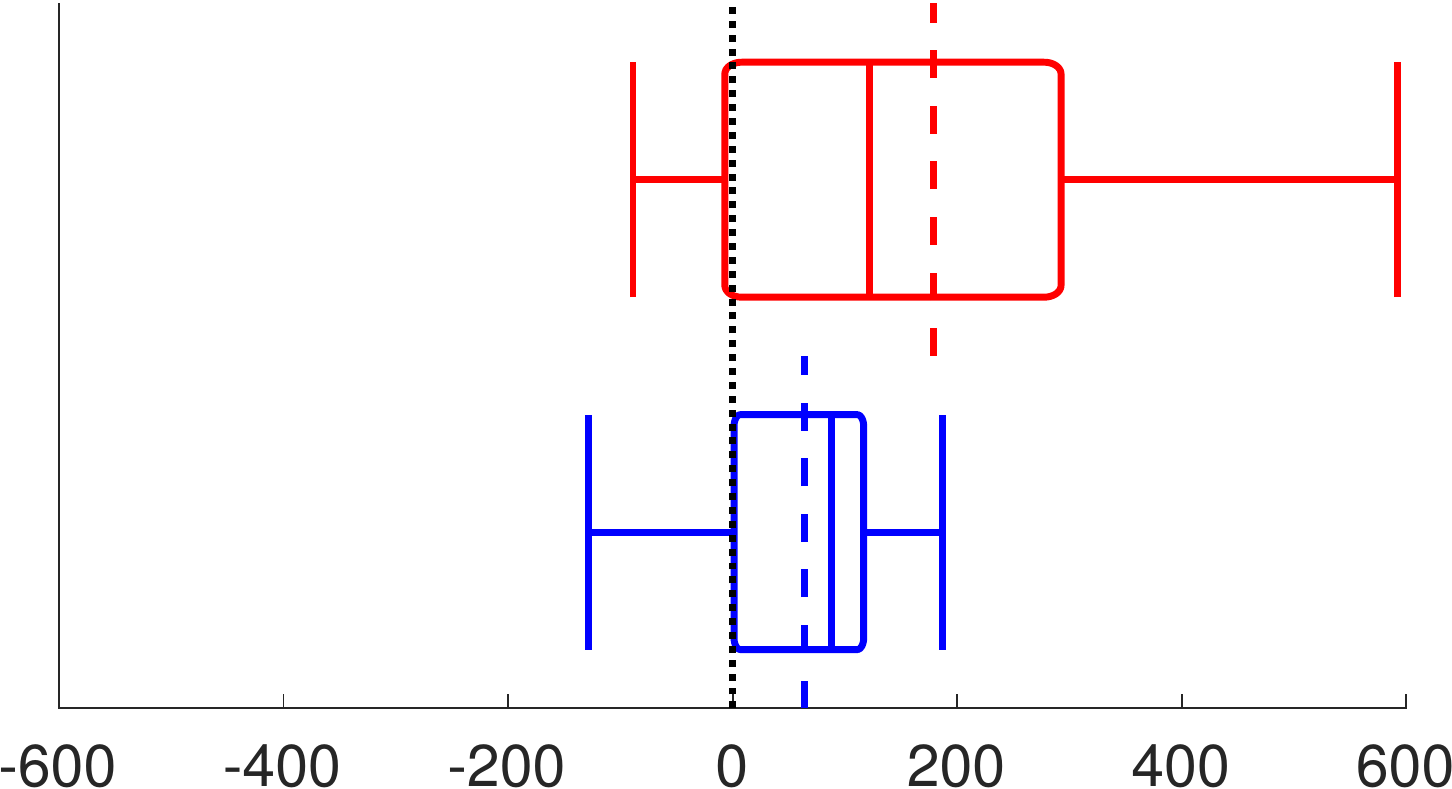}}
	  \put(86,48){\small $n = 16$} 
	           \put(58,38.75){\small \rotatebox{90}{\HER}}
         \put(58,31.75){\small \rotatebox{90}{\ER}}   
    \end{overpic} 
         
  \caption{\footnotesize{Turkey box plots for \emph{cell impact} data sets $\fvec$ for \HER (red) and \ER (blue) when divided by up by low/high \HER staining intensity ($\cvec$) and unamplified/amplified \HER gene expression ($\cnvec$). Vertical dashed lines show the mean values of the cell action ($\fvec$) for \HER and \ER with low and high staining intensities regardless of \HER gene copy number. Asterisks denote statistically significant differences between the data set denoted and the mean low/high \emph{HER2}/\ER cell action (vertical dashed line). The black dotted line corresponds to $\fv = 0$.}}
  \label{fig_BCA_Hists}
\end{SCfigure}

\section{Method Motivation via Toy System}\label{sec_toy_system}

In this section we evaluate the efficacy of our method on synthetic data (where we know the exact mathematical representation of the heterogeneity). We generate this data with the following simple model. For each cell $i$, we suppose that there is measurable quantity $y_i = y_i(t)$ corresponding to the staining intensity. We suppose that there are a number of cell phenotypes, labelled by $\tau$ in set $\mathcal{T}$. We suppose that a each cell produces a signalling field from other cells as well as to some internal dynamical system within the cell. Thus we write
\begin{equation}\label{eq_toy_y_single}
\frac{\dd y_i}{\dd t}  = \mu_i ( y_i) +  \phi_i ( y_i , u )  \, .
\end{equation}
Here $\mu_i$ represents the internal dynamics, and will account for the cell's phenotype $\tau$, and $\phi_i$ represents the forcing by the signalling field $u$. As before we suppose that the dynamics of $y_i$ are slower than the diffusion of the SF, so that $u$ is in quasi-steady state. In the method described in Section \ref{sec_method_overview}, this had the form
\begin{equation}\label{eq_phi_actual}
\phi_i ( y_i(t), u(\xv) ) = \frac{\beta}{\vert \CellDom_i \vert } \int_{ \p\CellDom_i  }  \left( u(\xv) - y_i(t) \right) \dd S \, ,
\end{equation}
where $\beta$ is a scaling constant with units length over time, and $u = u(\xv)$ obeys equations \eqref{eq_gen_chem_prof}--\eqref{eq_gen_chem_prof_bc}. The \emph{baseline} was then found by considering the average concentration at the cell's location were the cell not there.

For simplicity, in our synthetic system, we will consider cells occupying negligible area, i.e, they will behave as point sources/sinks of the SF --- and therefore we do not need to incorporate boundary conditions. In this case, the \emph{baseline} ($\bv_i$) can be calculated by removing the internal dynamics for cell $i$, i.e., by setting $\mu_i = 0$.

With cells as points, we take $\phi_i$ as the Green's function solution to equation \eqref{eq_gen_chem_prof}--\eqref{eq_gen_chem_prof_bc}. Thus,
\begin{equation}\label{eq_form_phi_i}
\phi_i =   \beta  \sum_{j\neq i} G_n (d_{ij},\alpha) \left[ y_j(t)  - y_i(t) \right] \, ,
\end{equation}
where $d_{ij}$ represents the distance between cells $i$ and $j$. The form of $G_n$ depends on the dimension, and is given by
\begin{equation}
G_n (r,\alpha) = \left\{ \begin{array}{ccc} 
K_0( \alpha r) / 2\pi & \text{if} &n=2\, , \\
e^{-\alpha r } / 4\pi r & \text{if} & n=3\, ,
\end{array}\right.
\end{equation}
where $K_0$ is the modified Bessel function of the second kind. The \emph{baseline} is then given by
\begin{equation}
\bv_i = \sum_{j\neq i} c_j \, G_n(d_{ij},\alpha)   \bigg/  \sum_{j\neq i}  G_n(d_{ij},\alpha) \, .
\end{equation}
Notice that as cells are points, there is no $\gamma$ parameter (though in some sense it has been replaced by $\beta$).

We specify that there are three cell phenotypes $\tau \in \mathcal{T}= \{1,2,3\}$. We take $N=3000$. $1500$ cells are placed uniformly at random in the unit circle; $1500$ cells are placed uniformly at random in the annulus between $r=1$ and $r=2$ (see Figure \ref{fig_toy_model}). Using $k$-means clustering, cells are assigned into 15 groups. Groups are then assigned a cell type label randomly so that there are 5 groups of each cell type. The internal dynamics of each cell type are chosen to be given by
\begin{equation}
\mu_i ( y_i ) = \left\{ \begin{array}{ccc}  \eps_i /4  - y_i & \text{ if } & \tau_i = 1\, , \\   0  & \text{ if } & \tau_i = 2\, , \\ 1 + \eps_i /4 - y_i & \text{ if } & \tau_i = 3\, , \end{array} \right. \, 
\end{equation}
and we use the 3 dimensional Green's function as the interaction term. Here $\eps_i$ is a random number drawn from $\eps_i\sim \mathcal{N}(0,1)$ and is the means by which we model heterogeneity within a fixed cell phenotype. We choose $\beta = 4\pi/N$ and $\alpha = 10$. We can interpret the cell phenotypes as: cell phenotype $\tau=1$ is a cell that averages the signal received by neighbours to settle at a steady state close to some intrinsic value $\eps_i/4$; cell phenotype $\tau=2$ is a passive cell that mimics nearby cells; and cell phenotype $\tau=3$ is a cell that averages the signal received by neighbours to settle at a steady state close to some intrinsic value $1 + \eps_i/4$. From random initial conditions in the unit interval ($y_i(t=0)\in [0,1]$), the system will reach a steady state. The functional forms chosen are such that there is considerable overlap in the distribution of steady states for each cell phenotype, so that it is not immediately clear which phenotype a cell belongs to given the steady state value $\cv_i$.

In Figures \ref{fig_toy_model}(a,e), we see a spatial plot of the cell centres and a histogram of the \emph{observed} intensities ($\cvec$); Figure \ref{fig_toy_model}(b,f) reveals the discrete cell phenotypes that are not immediately apparent when viewing Figures \ref{fig_toy_model}(a,e). To calculate the \emph{baseline} ($\bvec$), we assume the functional form of the interactions are as in equation \eqref{eq_form_phi_i}, but without specifying the value of $\alpha$, which is determined via a best fit as in Section \ref{subsec_param_sel}. In Figure \ref{fig_toy_model}(c,g), we show a spatial plot and histogram of the \emph{baseline} cell intensities; and in Figure \ref{fig_toy_model}(d, h) we show the \emph{cell impact} cell intensities. From Figure \ref{fig_toy_model}(f) to Figure Figure \ref{fig_toy_model}(g), we see the emergence of a trimodal distribution indicating the three cell phenotypes. The phenotype $\tau=2$ is easily identified in Figure \ref{fig_toy_model}(h) as the population having zero \emph{cell impact}: this is to be expected since these cells essentially copy what their neighbours are doing.

Over 1000 simulations, we find that $\alpha_*$ has a mean value of $11.49$ [95\% CI: (8.65, 14.46)], and $R^2_{\max}$ has a mean value of $79.6\%$ [95\% CI: (76.9\%, 82.3\%)]. Carrying out $k$-means clustering on $\cvec$, $\bvec$ and $\fvec$ using \emph{a priori} knowledge that there are 3 phenotypes present, one finds that on average one identifies the 3 groups with 70.2\% [using $\cvec$, 95\% CI: (49.9\%, 82.5\%)], 74.3\% [using $\bvec$, 95\% CI: (57.4\%, 87.3\%)], and 15.7\% [using $\fvec$, 95\% CI: (11.0\%, 20.1\%)] accuracy respectively. Therefore, one can obtain slightly higher accuracy by looking at the \emph{baseline} intensities rather than the stain intensities. Moreover, if we identify the passive $\tau=2$ cells using $\fv_i \approx 0$, see Figure \ref{fig_toy_model}(h), then the remaining cells can be clustered into 2 groups with approximately 100\% accuracy.

With $\beta = 2\pi/N$, the 2 dimensional Green's function performs with similar accuracy regarding $R^2_{\max}$ values and clustering analysis. However, the $\alpha_*$ value is misidentified with mean value $16.66$ [95\% CI: (12.87, 20.79)].

\begin{figure}[t]
  \centering
  \vspace{1em} 
  \begin{overpic}[width=0.24\textwidth]{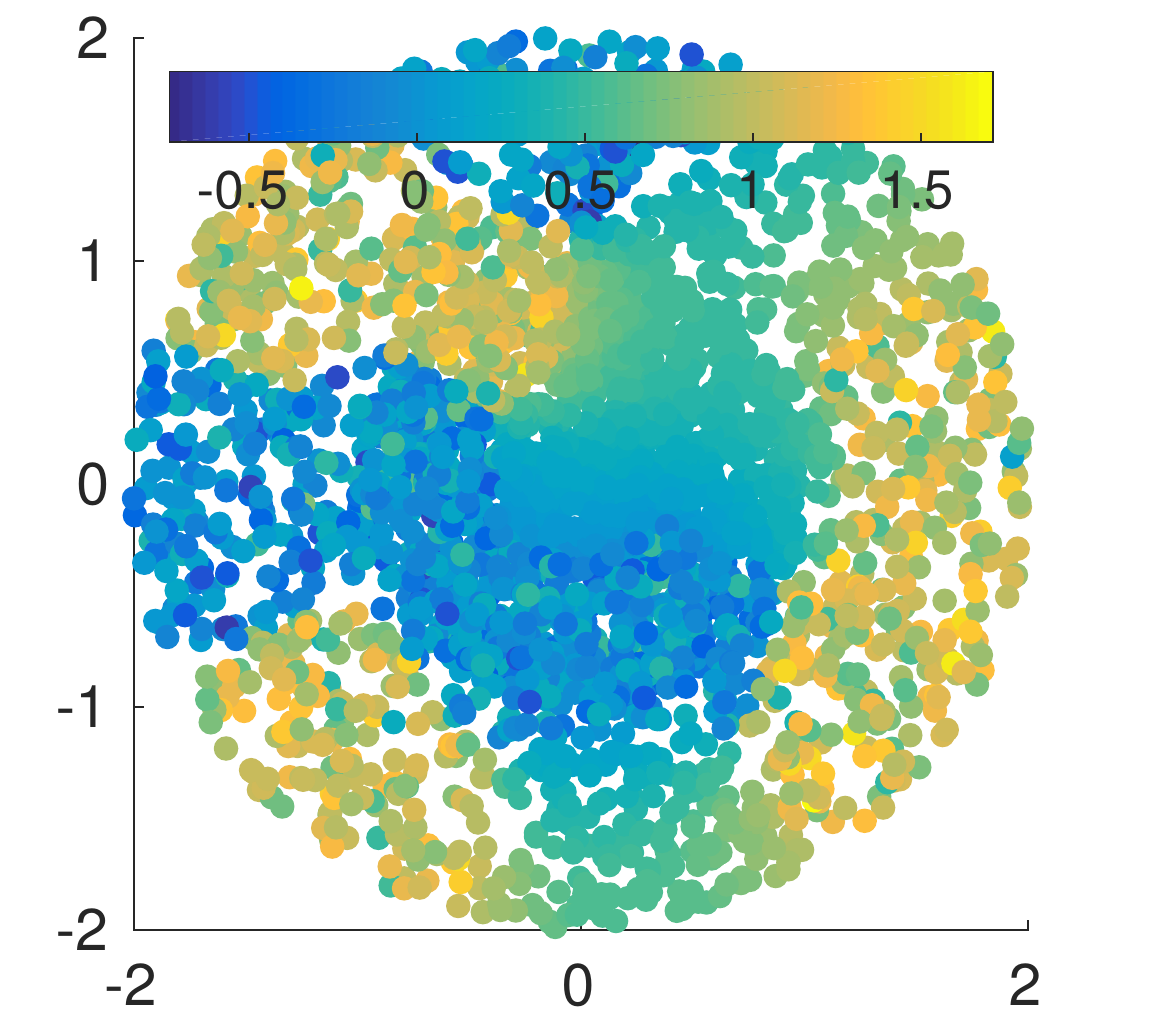}
  	\put(-3,83){\footnotesize a.)}
  \end{overpic} 
    \begin{overpic}[width=0.24\textwidth]{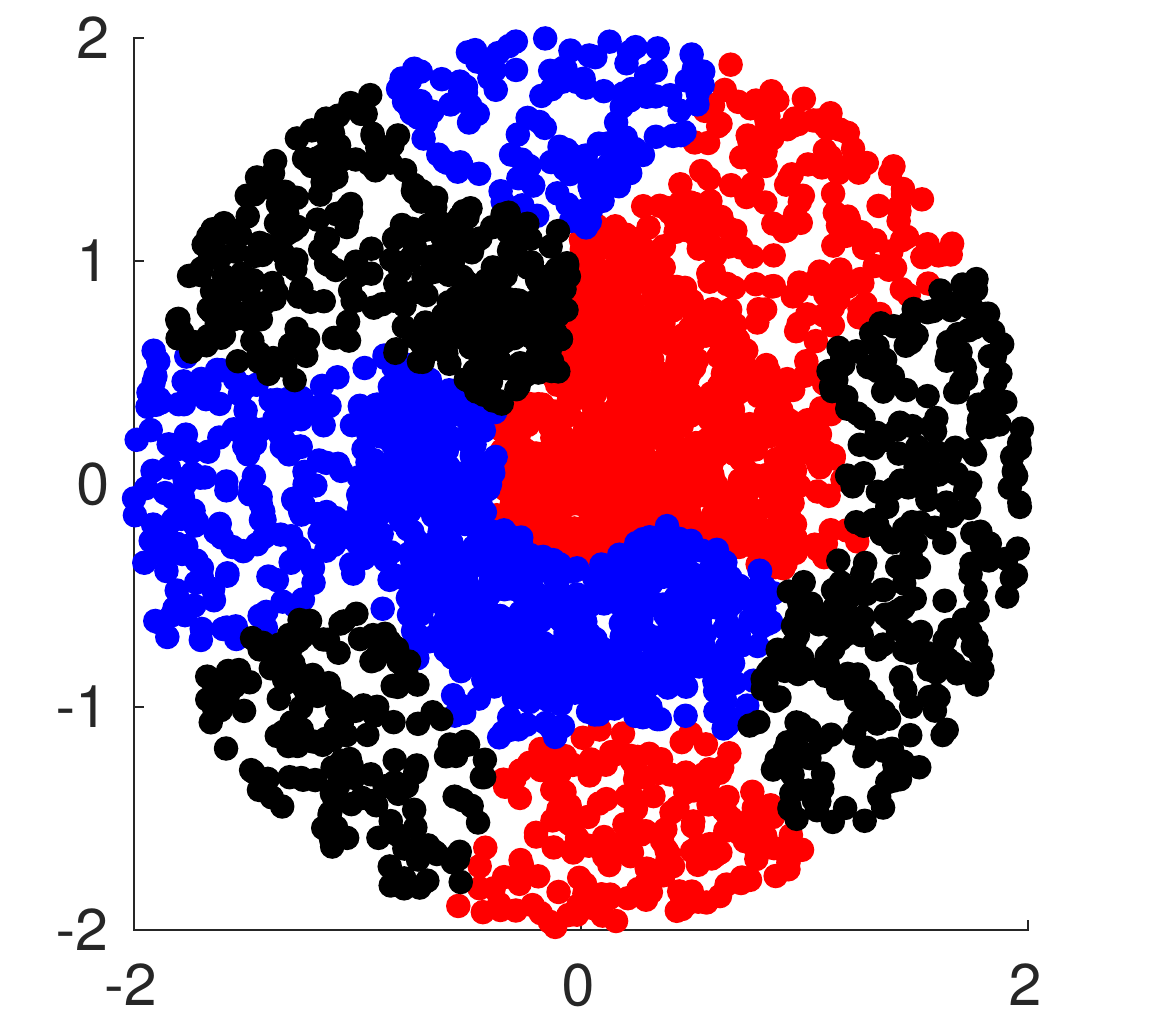}
  	\put(-3,83){\footnotesize b.)}
 \end{overpic} 
     \begin{overpic}[width=0.24\textwidth]{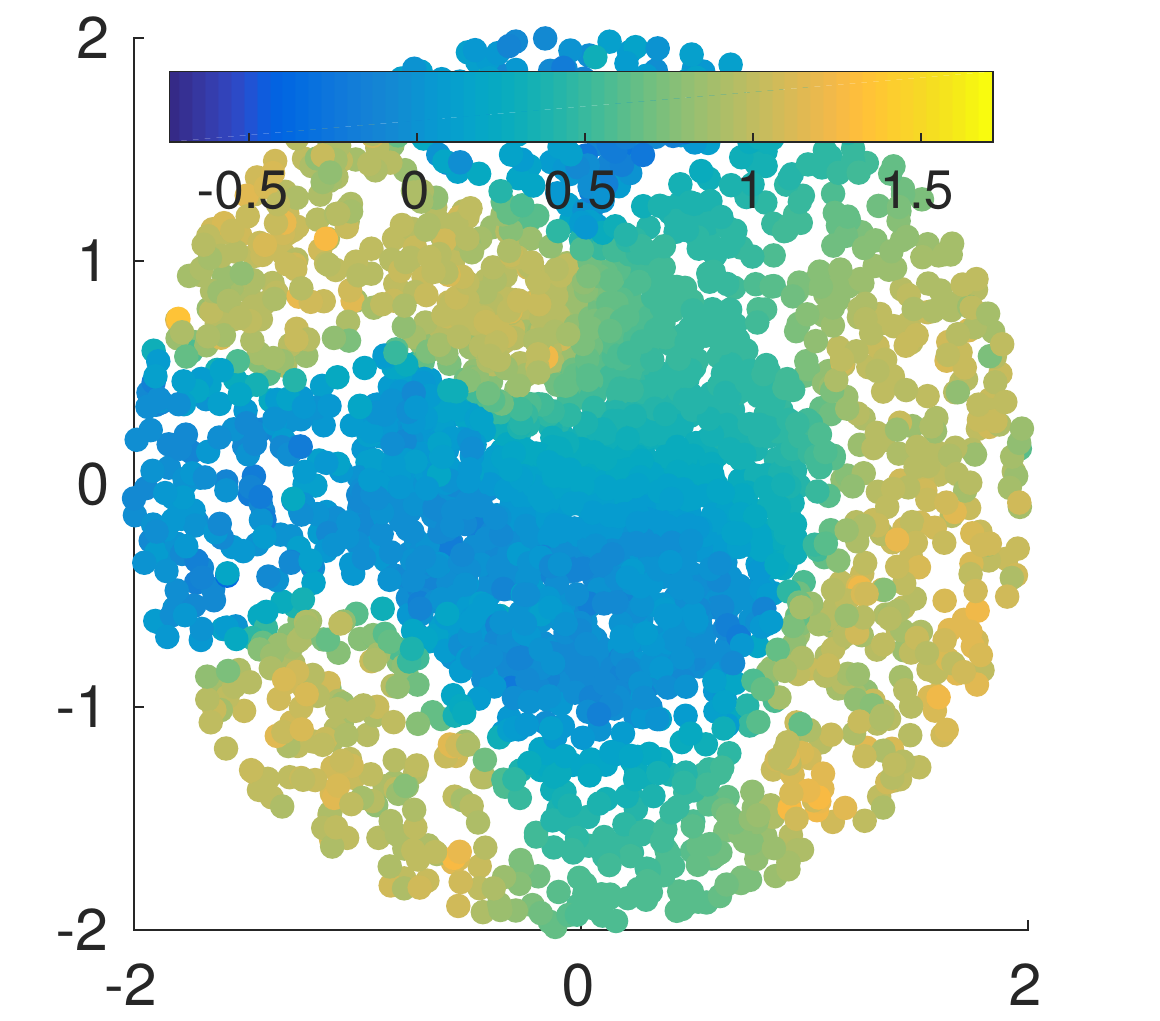}
  	\put(-3,83){\footnotesize c.)}
 \end{overpic} 
      \begin{overpic}[width=0.24\textwidth]{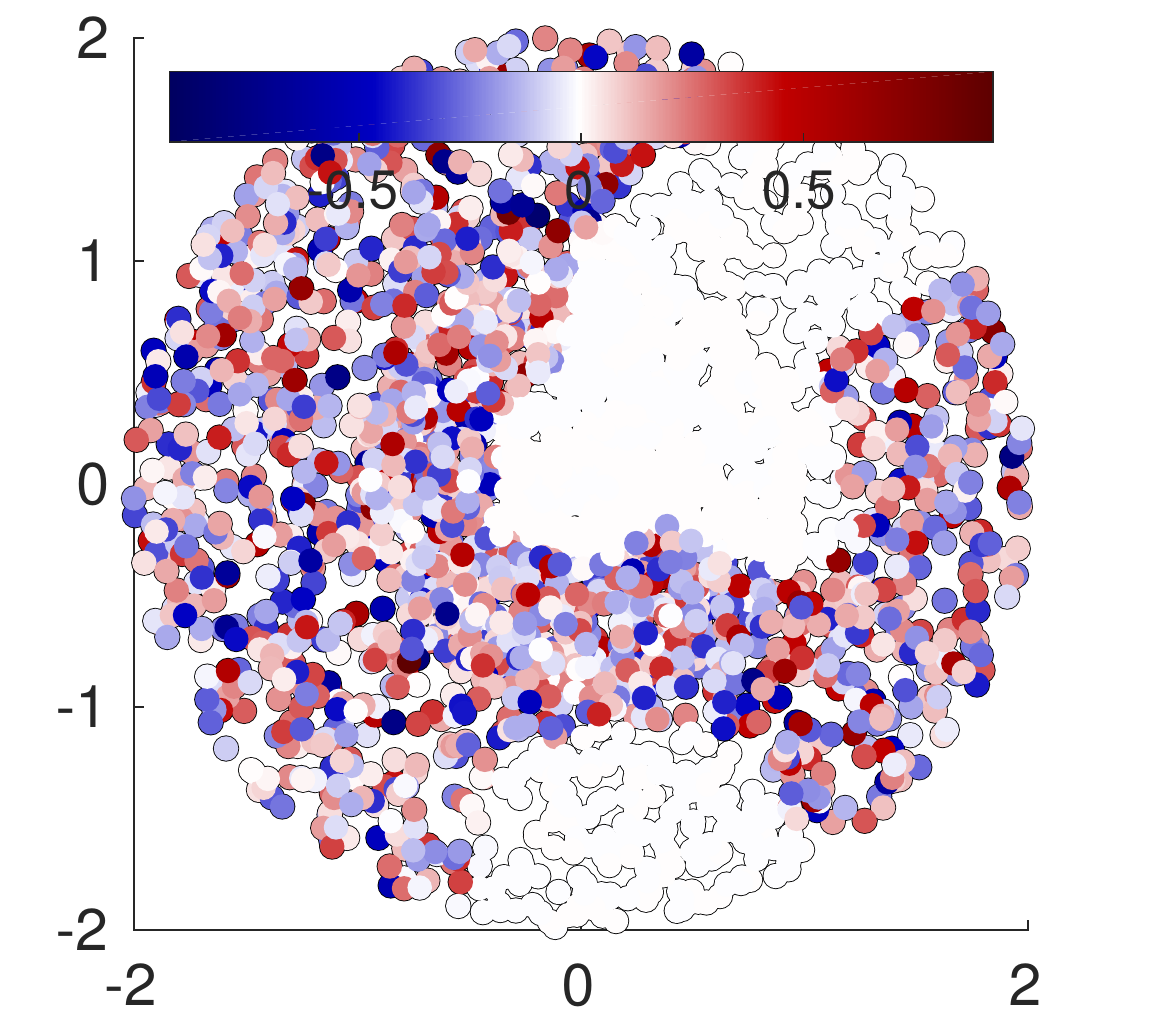}
  	\put(-3,83){\footnotesize d.)}
 \end{overpic} 
  \\  \vspace{2 mm}
      \begin{overpic}[width=0.24\textwidth]{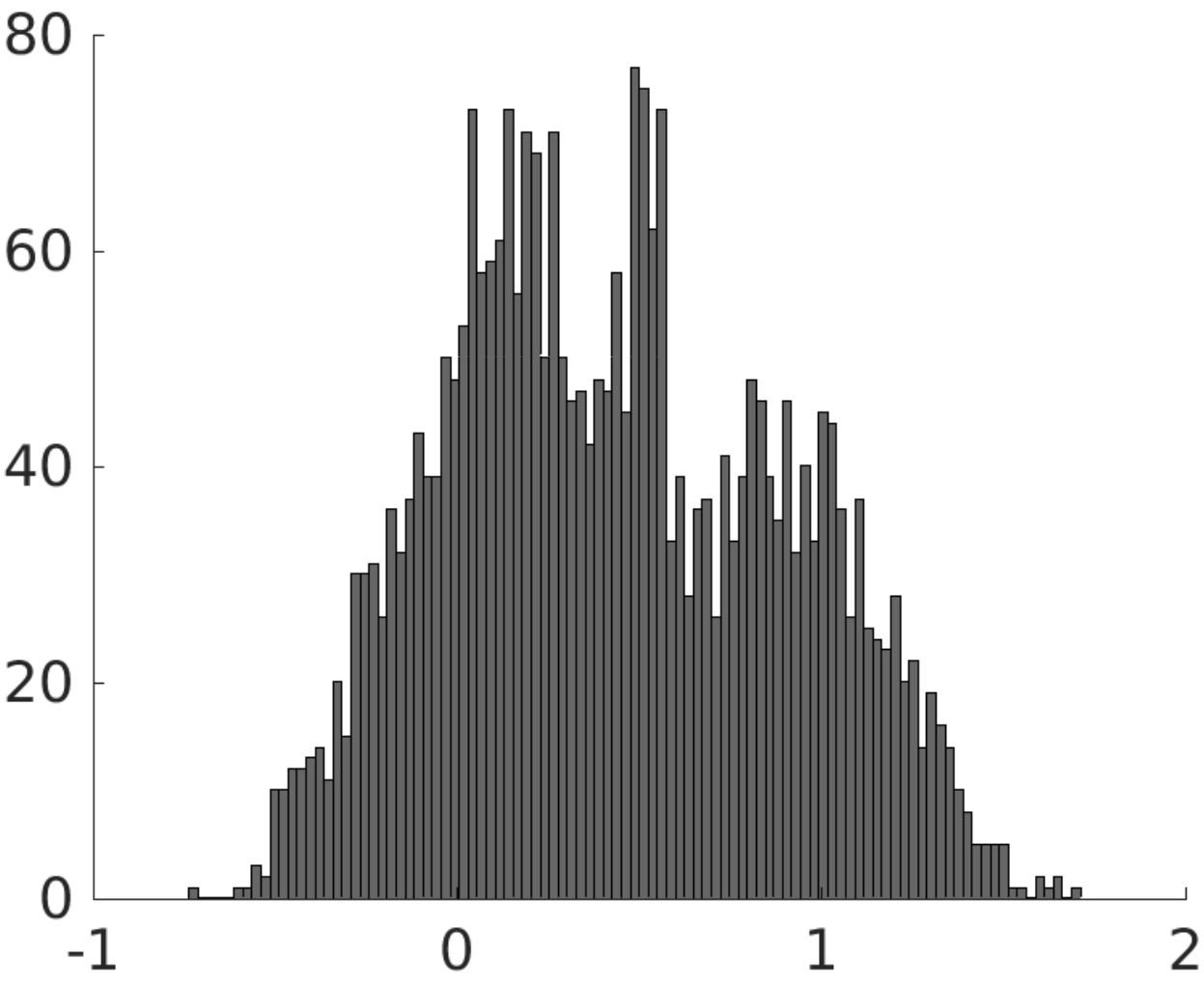}
  	\put(-3,83){\footnotesize e.)}
  \end{overpic} 
    \begin{overpic}[width=0.24\textwidth]{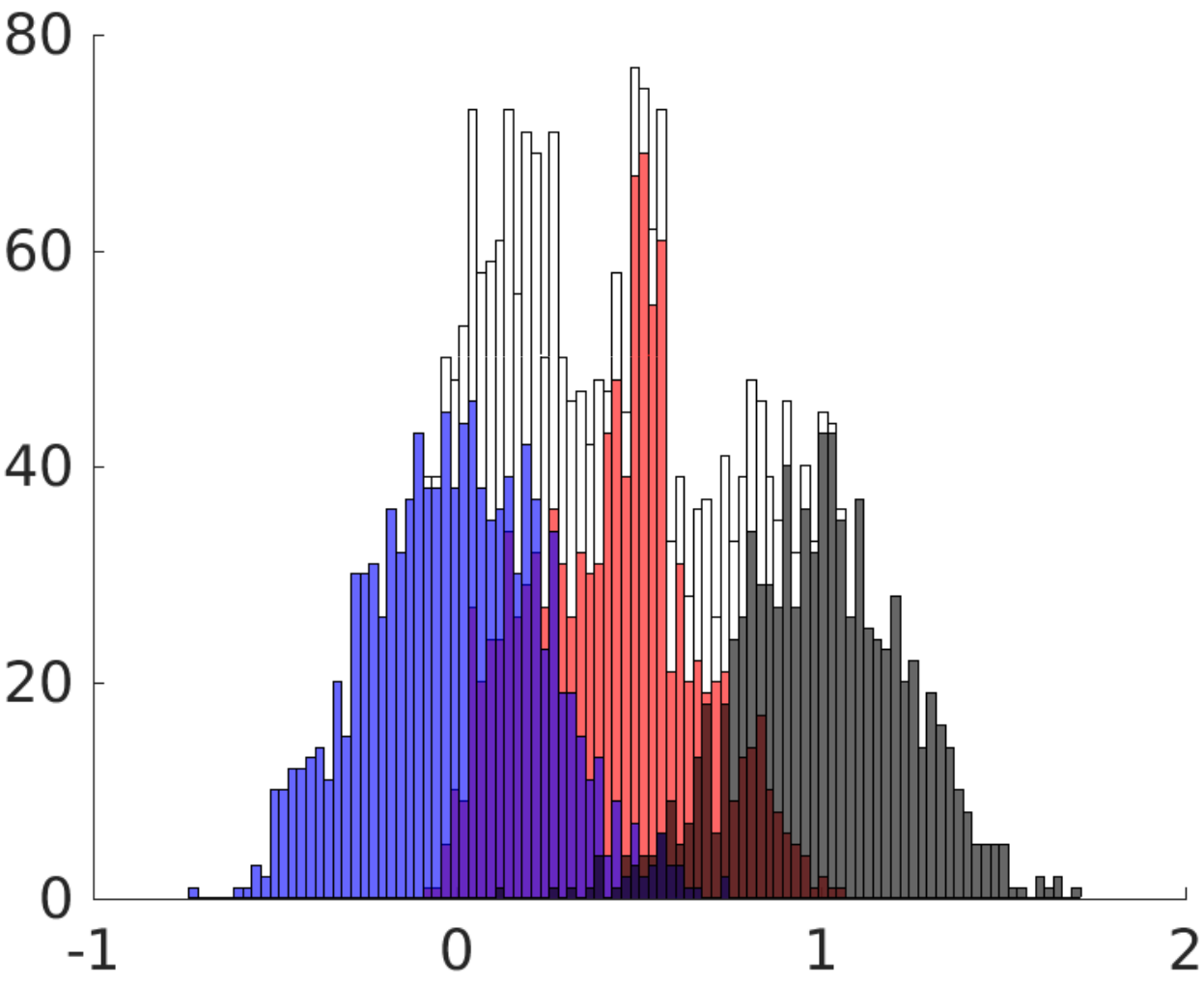}
  	\put(-3,83){\footnotesize f.)}
 \end{overpic} 
    \begin{overpic}[width=0.24\textwidth]{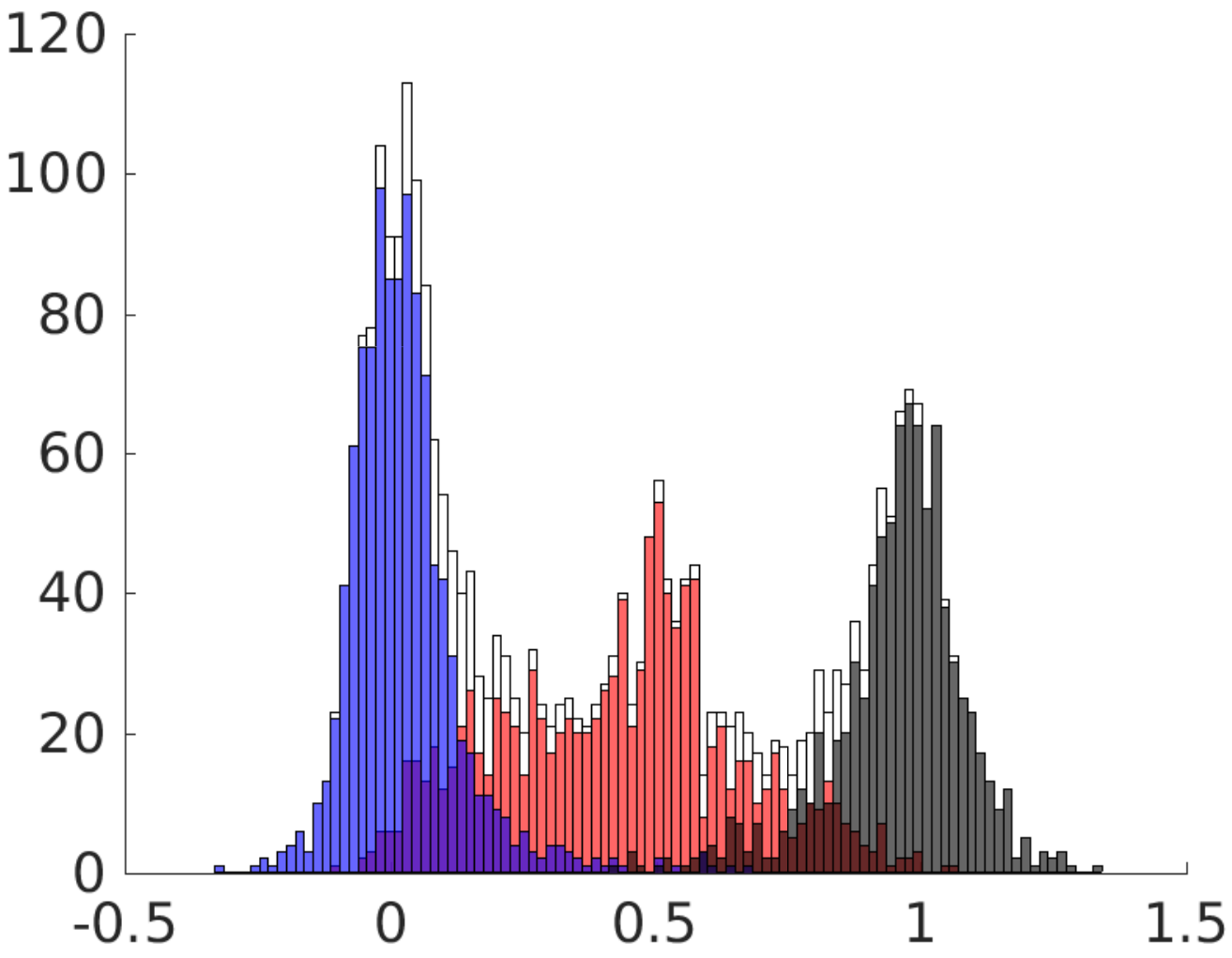}
  	\put(-3,83){\footnotesize g.)}
  \end{overpic} 
      \begin{overpic}[width=0.24\textwidth]{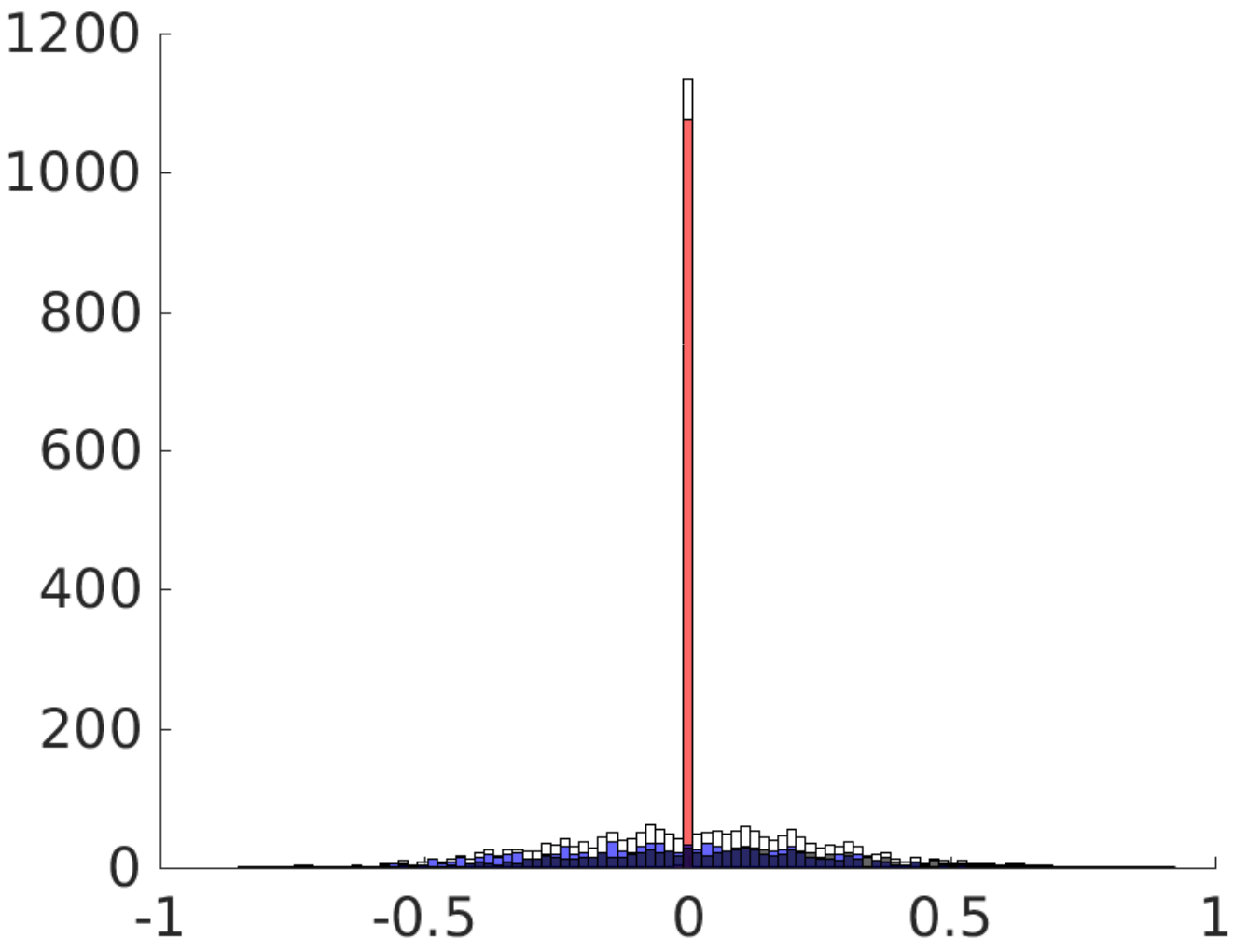}
  	\put(-3,83){\footnotesize h.)}
  \end{overpic} 
  \caption{\footnotesize{Plots relating to toy model. (a,e.) Plot of \emph{observed} cell intensities, and corresponding histogram of intensities. (b,f.) Plot of simulated cell locations, now marked red or blue according to cell phenotype, and histogram now broken up by cell phenotype. (c,g.) Plot and histogram of \emph{baseline} cell intensities. (d,h.) Plot and histogram of \emph{cell impact} intensities. For the simulation shown, $\alpha_* = 10.91$, $R^2_{\max} = 79\%$. Using $k$-means clustering on the 3 data sets ($\cvec$, $\bvec$, $\fvec$), cell phenotypes are correctly identified with: 79\% ($\cvec$), 84\% ($\bvec$), and 13\% ($\fvec$) accuracy.}}
  \label{fig_toy_model}
\end{figure}

\section{Discussion}\label{sec_discussion}

We have presented a method to identify and extract diffusible microenvironment signalling from histology slides. We are also able to quantify what percentage of the data is explainable by diffusion mediated signalling. We applied our method to breast cancer histology slides and found that \HER amplified cells with low \HER stain intensity have higher a higher \emph{cell impact} than the population of all low \HER stain intensity cells. Finally the method was evaluated on synthetic data generated by a simpler model for cell behaviour.

This initial study suggests a methodology that could be applied to more general problems. For example mechanotransduction is another important method of cellular communication \cite{Humphrey_2014}, to which our simulated ablation approach could be applied.

\subsection{Method Applicability}

In our analyses of clinical samples, we used stains for proteins where we are unsure of the exact sources behind variation in stain intensity, be that genetic, microenvironmental, or stochastic. \HER amplification is widely believed to be a genetic event and should be heritable. Expression levels of \HER protein are expected to reflect the gene amplification status. On the other hand, some experimental evidence shows \HER expression might reflect not only genetic material, but also microenvironmental stimuli, such as Notch and NF-$\kappa$B (RANK) signalling (with the effect of increased \HER transcription and regulation of the expansion of cancer stem cells) \cite{Zhang_2013,Korkaya_2009,Magnifico_2009,Ithimakin_2013}. The interpretation of the $R^2_{\max}$ values should be as \emph{what is the maximum percent of the signal variance explainable via our model of cellular communication for paracrine signalling}. Therefore, our method can potentially reveal paracrine interaction even in scenarios where variability of the analysed trait is primarily believed to reflect genetic differences.

Due to this work being only a preliminary study, the presented analysis should be viewed as a promising first step and demonstration of the method, rather than bona fide proof of principle. We envision that the method presented here will be directly applicable to defined scenarios of biological and clinical importance. For example, paracrine signalling is responsible for microenvironment-directed therapy resistance against most of the targeted anti-cancer therapies used in clinics \cite{Wilson_2012}. Yet, the signalling fields as well as impact of individual cells on them have not been studied due to the lack of appropriate tools. For example, the method could be used to interrogate the spatial distribution of c-MET phosphorylation, implied in resistance to ALK targeting tyrosine kinase inhibitors in non-small cell lung cancers \cite{Yamada_2012, Wilson_2012}, as c-MET phosphorylation should be reflective of microenvironmental gradients of its ligand HGF, which is primarily produced by cancer-associated fibroblasts.

A clear next step to promote use and acceptance of our method would be to see how our method performs against different types of cancer stained under the same protocol. Particularly, it may be of particular interest to investigate cancers that are known to be genetically homogeneous, and so the majority of the variance in observation may come from diffusion. In contract, one could investigate genetically heterogeneous cancers and therefore proportionally less variance may be accountable by diffusion. Additionally, it would be particularly pertinent to design experiments where stains relating to metabolic activity were selected. Cell types are usually identifiable by eye, and our approach may even aid hypothesis of cell function.

\subsection{Data Limitations}

Clearly our method as it stands suffers many drawbacks. Regarding use of data, we are stuck working in two dimensions. Working in three dimensions using complete reconstructions of cell geometry would be possible; however, this would be expensive and not repeatable in a clinical setting. 

There will also be edge effect artefacts that we have not accounted for in the model. By this, we mean there will be cells that impact the SF, but were excluded by the biopsy extraction and preparation process. One option is to decrease sample sizes by ignoring a layer of cells at the edge of the histology slice.

When considering the data set used in Sections \ref{sec_molpath_results}, for our method to ``pick out'' important features of the data, we recommend that $R^2_{\max} \gtrsim 40\%$ at a minimum.

\subsection{Technique Refinement}\label{subsec_technique_refinement}

It is likely that other methods for parameter selection are also worthy of exploration. For example, one could also make the opposite assumption: that the contribution of genetic heterogeneity is large, and the microenvironment minimally contributes to the \emph{observed} data set $\cvec$ --- however this does not work practically as then one could then either set $\alpha \to\infty$, or $\gamma = 0$ and then $\bv_i=0$ and $\Vert \cvec - \bvec \Vert$ is maximised. Constraints have to be introduced in an intelligent manner.

Our model could also be expanded to include different classes of objects, for instance, it would not be difficult to include blood vessel structures. Additionally, were it the case that a cell was stained multiple times and one had an \emph{a priori} knowledge for how a cell was supposed to function, relevant constraints could be included in the parameter selection method. In this paper, we did not carry out an exhaustive search on parameter selection techniques.

\section{Acknowledgements}
J.P.T-K. received funding from the EPSRC under grant reference number EP/G037280/1. We thank Jacob Scott, Jan Poleszczuk, Shalla Hanson, Marc Ryser, Eric Lau, Aaron Goldman, Alexander R.A. Anderson, David Robert Grimes, Scott Dawson, Peter Koltai, Stefan Klus and Yoganand Balagurunathan for helpful discussions. 



\bibliographystyle{siam}
\bibliography{References_MathBasisMolPath.bib}

%
\appendix 
%

\section{Existence and Uniqueness}\label{app_exist_unique}

The original problem is stated as equations \eqref{eq_new_chem_prof}--\eqref{eq_new_chem_prof_bc} for each $i\in\mathcal{N}$. Without any loss of generality, we can solve equations \eqref{eq_gen_chem_prof}--\eqref{eq_gen_chem_prof_bc}. Written out again, this is
\begin{align}
\DelX u -\alpha^2 u=0      & \text{ in } \Om     \, ,  \nonumber \\
\vect{n}_{i} \cdot \NabX u = \gamma (c_{i} - u)  & \text{ on } \p \CellDom_{i} \, .
\label{eq_systGen}
\end{align}
We specify that $\p \CellDom_i$ is $\mathcal{C}^1$ regular, and we prove the existence and uniqueness of $u$ in $H^1(\Om)$ --- although the solution is likely smoother.

The question whether we have to impose a limit condition of the form $\lim_{\Vert \xv\Vert  \rightarrow\infty} u (\xv)=0$ is \emph{a priori} unclear. Intuitively, if we impose such a condition, it has to be zero, since the dissipation term would mean a null concentration at an infinite distance of the source. In fact, we will not have impose this, as the unique solution of \eqref{eq_systGen} is zero at infinity, due to the dissipation.

Let us recall that the Sobolev space $W_{m}^p(\Om)$ is the space of function defined as follows
\begin{equation}
W_{m}^p(\Om)=\{ u\in L^p(\Om) \, \vert \,  D^{\beta}u\in L^p(\Om)\, ,\forall \beta \leq m\} \, .
\end{equation}
Here we will be particularly focused on the Hilbert space $H^1(\Om)=W_{1}^2(\Om)$ endowed with the $H^1$ norm defined by
\begin{equation}
\Vert v\Vert_{H^1(\Om)}=\Vert v \Vert_{L^2(\Om)}+\Vert \nabla v \Vert_{L^2(\Om)}\, ,  \,\forall v\in H^1(\Om) \, .
\end{equation}
\begin{prop}
The problem \eqref{eq_systGen} has a unique solution in $H^1(\Om)$.
\end{prop}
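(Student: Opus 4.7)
The plan is to apply the Lax-Milgram theorem to a weak formulation in $H^1(\Om)$. First I would test the PDE against an arbitrary $v\in H^1(\Om)$, integrate over $\Om$, and apply Green's formula; the Robin-type condition $\vect{n}_i\cdot\NabX u = \gamma(\cv_i - u)$ on $\p\CellDom_i$ folds into the bilinear form, yielding the weak problem: find $u\in H^1(\Om)$ such that
\begin{equation*}
a(u,v) := \int_\Om \NabX u\cdot\NabX v \, \dd\xv + \alpha^2\int_\Om uv \, \dd\xv + \gamma\sum_{i=1}^N \int_{\p\CellDom_i} uv \, \dd S \;=\; \gamma\sum_{i=1}^N \cv_i \int_{\p\CellDom_i} v \, \dd S =: L(v)
\end{equation*}
for every $v\in H^1(\Om)$. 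Care must be taken with the sign of the boundary term, since the outward unit normal to $\Om$ on $\p\CellDom_i$ is $-\vect{n}_i$; this sign flip is precisely what puts $\gamma\int uv\,\dd S$ on the left-hand side and the data $\gamma\cv_i\int v\,\dd S$ on the right.

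Next I would verify the Lax-Milgram hypotheses on $H^1(\Om)$ equipped with its natural norm. Coercivity is immediate and is the essential structural point: because $\alpha^2>0$ and $\gamma>0$, one has $a(u,u) \geq \min(1,\alpha^2)\Vert u\Vert_{H^1(\Om)}^2$, so no Poincar\'e-type inequality is needed --- which is fortunate, since such an inequality would fail on the unbounded domain $\Om$. The dissipation term $\alpha^2 u$ is precisely what lets us avoid imposing a decay condition at infinity, matching the intuitive remark preceding the proposition statement. Continuity of $a$ and of $L$ then reduces, via Cauchy-Schwarz on the volume integrals, to establishing the trace bound $\Vert v\Vert_{L^2(\p\CellDom_i)} \leq C_i\Vert v\Vert_{H^1(\Om)}$ for each $i\in\mathcal{N}$.

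The main obstacle is justifying the trace inequality on the \emph{unbounded} domain $\Om$, since the standard statement is usually quoted for bounded Lipschitz domains. Because each $\p\CellDom_i$ is compact and of class $\mathcal{C}^1$, I would fix a smooth cutoff $\chi_i$ that equals $1$ in a neighbourhood of $\p\CellDom_i$ and is supported in a bounded Lipschitz subdomain $U_i\subset\Om$; applying the classical trace theorem to $\chi_i v \in H^1(U_i)$ recovers the desired bound with a constant depending on $\chi_i$ and $U_i$ but not on $v$. Summing over the finitely many cells gives continuity of $a$ and $L$, so Lax-Milgram produces a unique $u\in H^1(\Om)$ solving the weak problem. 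Since $H^1$-membership automatically forces $L^2$-decay of $u$ at infinity, no additional far-field condition has to be imposed, completing the argument.
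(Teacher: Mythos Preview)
Your proposal is correct and follows essentially the same route as the paper: the identical weak formulation, Lax--Milgram, and the same coercivity constant $\min(1,\alpha^2)$. The only minor difference is that where the paper cites a trace theorem valid for $\mathcal{C}^1$ domains with bounded boundary (followed by a Sobolev embedding $H^{1/2}(\p\Om)\hookrightarrow L^q(\p\Om)$), you instead localise via a cutoff to reduce to the bounded-domain trace theorem; both devices serve the same purpose and yield the same estimate.
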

\begin{proof}
The weak formulation of the problem \eqref{eq_systGen} given by
\begin{equation}
a(u,v) = l(v)\, , \quad \forall v\in H^{1}(\Om) \, ,
\label{eq_weakSyst}
\end{equation}
for
\begin{equation}
a(u,v)=\int_{\Om} \NabX u(\xv)\cdot\NabX v (\xv) \dd \xv+\alpha^2\int_{\Om}u(\xv)v(\xv) \dd \xv+\gamma\int_{\p \Om}u(\xv)v(\xv)\dd S\, , \quad \forall u,v\in H^1(\Om) \, ,
\end{equation}
and
\begin{equation}
l(v)=\gamma\int_{\p \Om}\psi(\xv) v(\xv)\dd S \, , \quad \forall v\in H^1(\Om)\,  ,
\end{equation}
where $\p \Om=\bigcup_{i}\p\CellDom_{i}$ and $\psi$ is a function satisfying
\begin{equation}
\psi( \xv )=\cv_{i}  \text{ when } \xv \in \CellDom_{i} \, .
 \end{equation}
We now aim to prove continuity and coercivity of $a(u,v)$ on $H^1(\Om)$, as well as the continuity of $l(v)$ on $H^1(\Om)$ to satisfy the Lax--Milgram Theorem (Theorem 5.8 on page 83 in \cite{Gilbarg_2015}). Specifically, we wish to show that
\begin{eqnarray}
\text{Continuity of $a(u,v)$}\, & \iff &\, \exists A>0 \text{ s.t. }\vert a(u,v) \vert \leq  A \Vert u \Vert \Vert v \Vert \, ,  \nonumber \\ 
\text{Coercivity of $a(u,v)$}\, & \iff &\, \exists B>0 \text{ s.t. }\vert a(u,u) \vert \geq  B \Vert u \Vert   \, ,\\ 
\text{Continuity of $l(v)$}\, & \iff &\, \exists C>0 \text{ s.t. }\vert l(v) \vert \leq  C \Vert v \Vert  \, .  \nonumber
\end{eqnarray}

\underline{Continuity of $a(u,v)$}

For $u,v\in H^1(\Om)$, by nature of the modulus function, one can immediately write following inequality
\begin{equation}
\vert a(u,v) \vert \leq \left\vert\int_{\Om} \NabX u(\xv)\cdot\NabX v(\xv)\dd \xv\right\vert+\alpha^2\left\vert\int_{\Om}u(\xv)v(\xv)\dd \xv\right\vert+\gamma\left\vert\int_{\p\Om}u(\xv)v(\xv)\dd S\right\vert.
\end{equation}
The Cauchy-Schwartz inequality allows us then to write
\begin{equation}
\vert a(u,v) \vert \leq \Vert\NabX u\Vert_{L^2(\Om)}\Vert\NabX v\Vert_{L^2(\Om)}+\alpha^2\Vert u\Vert_{L^2(\Om)}\Vert v\Vert_{L^2(\Om)}+\gamma\Vert u_{|\p \Om}\Vert_{L^2(\p\Om)}\Vert v_{|\p \Om}\Vert_{L^2(\p\Om)}.
\label{eq_BilinInequ}
\end{equation}
According to the Trace Theorem \cite{Ding_1996}, since $u,v\in H^1(\Om)$ and because $\Om$ is $\mathcal{C}^1$ and $\p\Om$ is bounded, then $u_{|\p \Om},v_{|\p \Om}\in H^{\frac{1}{2}}(\p\Om)$ and the Trace operator is continuous from $H^1(\Om)$ to $H^{\frac{1}{2}}(\p\Om)$. So, there exists a positive constant $C$ such that
\begin{equation}
\Vert u_{|\p \Om}\Vert_{H^{\frac{1}{2}}(\p\Om)}\Vert v_{|\p \Om}\Vert_{H^{\frac{1}{2}}(\p\Om)}\leq C^2\Vert u\Vert_{H^1(\Om)}\Vert v\Vert_{H^1(\Om)}.
\label{eq_TraceIneq}
\end{equation}
 According to the Sobolev Embedding Theorem \cite{Ziemer_2012}, for an open set $U\in\mathds{R}^N$, $W^{m,p}(U)\subset L^q(U), \forall 1<q<\infty$ if $mp=N$, which is the case here for $U=\p \Om$, $m=\frac{1}{2}$, $p=2$ and $N=1$. Moreover, this injection is continuous and then, for all $1<q<\infty$, there is a positive constant $D_{q}$ such that
\begin{equation}
\Vert u_{|\p \Om}\Vert_{L^{q}(\p\Om)}\Vert v_{|\p \Om}\Vert_{L^{q}(\p\Om)}\leq D_{q}^2\Vert u_{|\p \Om}\Vert_{H^{\frac{1}{2}}(\p\Om)}\Vert v_{|\p \Om}\Vert_{H^{\frac{1}{2}}(\p\Om)}.
\label{eq_SobIneq}
\end{equation}
Therefore, using \eqref{eq_TraceIneq} and \eqref{eq_SobIneq}, we can write
 \begin{equation}
\Vert u_{|\p \Om}\Vert_{L^{q}(\p\Om)}\Vert v_{|\p \Om}\Vert_{L^{q}(\p\Om)}\leq C^{2}D_{2}^{2}\Vert u\Vert_{H^1(\Om)}\Vert v\Vert_{H^1(\Om)}.
\end{equation}
From this inequality, we can write \eqref{eq_BilinInequ} as follows
\begin{equation}
\vert a(u,v) \vert \leq \Vert\NabX u\Vert_{L^2(\Om)}\Vert\NabX v\Vert_{L^2(\Om)}+\alpha^2\Vert u\Vert_{L^2(\Om)}\Vert v\Vert_{L^2(\Om)}+ \gamma C^{2}D_{2}^{2} \Vert u\Vert_{H^1(\Om)}\Vert v\Vert_{H^1(\Om)}\, ,
\label{eq_IneqCont}
\end{equation}
Because $\Vert\NabX u\Vert_{L^2(\Om)}\Vert\NabX v\Vert_{L^2(\Om)}\leq \Vert u\Vert_{H^1(\Om)}\Vert v\Vert_{H^1(\Om)}$ and $\Vert u \Vert_{L^2(\Om)}\Vert\ v \Vert_{L^2(\Om)}\leq \Vert u \Vert_{H^1(\Om)}\Vert v \Vert_{H^1(\Om)}$, then the following inequality can be deduced from \eqref{eq_IneqCont}
\begin{equation}
\vert a(u,v) \vert \leq (1+\alpha^2+ \gamma C^{2}D_{2}^{2} )\Vert u\Vert_{H^1(\Om)}\Vert v\Vert_{H^1(\Om)}\, ,
\end{equation}
allowing to conclude that the bilinear form $a(u,v)$ is continuous on $H^1(\Om)$.

\underline{Coercivity of $a(u,v)$}

Let us prove now the coercivity of $a$. Let $u\in H^1(\Om)$. We can write
\begin{equation}
\vert a(u,u) \vert=\Vert\NabX u\Vert_{L^2(\Om)}^2+\alpha^2\Vert u\Vert_{L^2(\Om)}^2+\gamma\int_{\p \Om}[u(\xv)]^2\dd S \, .
\end{equation}
Since the term $\gamma\int_{\p \Om}u^2\dd S$ is positive, then
\begin{equation}
\vert a(u,u) \vert\geq \Vert\NabX u\Vert_{L^2(\Om)}^2+\alpha^2\Vert u\Vert_{L^2(\Om)}^2 \, ,
\end{equation}
and therefore
\begin{equation}
\vert a(u,u) \vert\geq \min (1,\alpha^2)\Vert u\Vert_{H^1(\Om)}^2\, ,
\end{equation}
which proves the coercivity of $a(u,v)$.

\underline{Continuity of $l(v)$}

Let $v\in H^1(\Om)$, we can immediately write the following inequality
\begin{equation}
\vert l (v)\vert\leq\gamma\int_{\p \Om}\vert\psi(\xv) v(\xv) \vert\dd S \, .
\end{equation}
Because the function $\psi$ is bounded on $\Om$, we can write
\begin{equation}
\vert l (v)\vert\leq \gamma\Vert \psi\Vert_{L^{\infty}}\int_{\p \Om}\vert v(\xv)\vert\dd S \, ,
\end{equation}
which can be written
\begin{equation}
\vert l (v)\vert\leq \Vert \psi\Vert_{L^{\infty}}\Vert v\Vert_{L^1(\p\Om)} \, .
\label{eq_IneqCont2}
\end{equation}
Using again the Trace Theorem \cite{Ding_1996}, since $v \in H^1(\Om)$, we know that there exists a positive constant $C$ such that 
\begin{equation}
\Vert v_{|\p \Om}\Vert_{H^{\frac{1}{2}}(\p\Om)}\leq C\Vert v\Vert_{H^1(\Om)}  \, .
\label{eq_TraceIneq2}
\end{equation}
According once more to the Sobolev Embedding Theorem \cite{Ziemer_2012}, there is a positive constant $D_{q}$ such that
\begin{equation}
\Vert v_{|\p \Om}\Vert_{L^{q}(\p\Om)}\leq D_{q}\Vert v_{|\p \Om}\Vert_{H^{\frac{1}{2}}(\p\Om)} \, .
\label{eq_SobIneq2}
\end{equation}
Therefore, using \eqref{eq_TraceIneq2} and \eqref{eq_SobIneq2}, we can write
 \begin{equation}
\Vert v_{|\p \Om}\Vert_{L^{1}(\p\Om)}\leq C D_{1}\Vert v\Vert_{H^1(\Om)} \, .
\end{equation}
Then we can write \eqref{eq_IneqCont2} as follows
\begin{equation}
\vert l (v)\vert\leq \gamma C D_{1}\Vert \psi\Vert_{L^{\infty}} \Vert v\Vert_{H^1(\Om)} \, ,
\label{eq_IneqCont3}
\end{equation}
which proves the continuity of $l$ on $H^1(\Om)$. Then the Lax--Milgram Theorem \cite{Gilbarg_2015} ensures that the problem \eqref{eq_weakSyst} has a unique solution $u$ on $H^1(\Om)$.
\end{proof}

\begin{remark}
The previous proof does not work for the particular case $\gamma=1$, which corresponds to a Dirichlet condition imposed on the surface of each cell. The proof of existence and uniqueness in this particular case would need to proceed differently. This case can be treated in defining the closed and convex set $K=\{ v\in H^1(\Om)\, \vert \, v-\psi\in H_{0}^1(\Om)\}$ and applying the Stampachia Theorem \cite{Brezis_1983}.
\end{remark}

\section{Numerical Approach}\label{app_numerical_approach}

Due to the geometrical effects involved in this problem, we use a Finite Element Method (FEM) approach. For a practical guide to implementation, see Ref.~\cite{Alberty_1999}, and a theoretical guide to elliptic PDEs, see Ref.~\cite{Gilbarg_2015}. In weak form, our problem is for cell $i$ removed on domain $\Om_i =\Om  \cup \CellDom_i =  \mathds{R}^2 \setminus \bigcup_{j\neq i} \CellDom_j$. For $u_i,v \in H^1(\Om_i )$, we have the weak form problem
\begin{align}
\int_{\Om_i} \NabX v(\xv) \cdot \NabX u_i (\xv) \dd \xv + \alpha^2 \int_{\Om_i}  v(\xv) u_i (\xv) \dd \xv + \gamma  \int_{\p\Om_i}  v(\xv) u_i (\xv) \dd S = \gamma \int_{\p\Om_i}  v(\xv) \psi_i (\xv)\dd S \, ,
\end{align}
where $\psi_i = \psi_i(\xv)$ is any function that satisfies
\begin{equation}
\psi_i(\xv) = \cv_j \text{ when } \xv\in\bar{\CellDom}_j \, ,
\end{equation}
for $j = 1,\dots,i-1,i+1,\dots,N$. We use a standard continuous Galerkin method with piecewise linear basis functions $\{\eta_k(\xv) \}_{k = 1}^K$ on the triangulation of $\Om_i$, $\mathcal{T} = \mathcal{T}(\Om_i)$. For a triangle $T$ with vertices $(t_1,t_2,t_3)$, and the $k^{\text{th}}$ basis function being located at vertex $t_1$, $\eta_k$ is given as 
\begin{equation}
\eta_k (x,y) =  \frac{1}{2 \vert T \vert } \det \left( \begin{array}{ccc} 1 & x & y \\ 1 & x_{t_2} & y_{t_2} \\ 1 & x_{t_3} & y_{t_3}  \end{array} \right) \, , \quad  \forall(x,y)\in T   \, ,
\end{equation}
for 
\begin{equation}
\vert  T \vert = \text{Area}(T)  =  \frac{1}{2} \det \left( \begin{array}{ccc} 1 & x_{t_1} & y_{t_1} \\ 1 & x_{t_2} & y_{t_2} \\ 1 & x_{t_3} & y_{t_3}  \end{array} \right) \, ,
\end{equation}
and $\eta_k (x,y) = 0$ for $(x,y)\not\in T$. Therefore
\begin{equation}
\eta_k (x_l, y_l) = \de_{k l} \, , \text{ for }k,l = 1,\dots K \, .
\end{equation}
Replacing $v$ by the $l^{\text{th}}$ basis function $\eta_l$ and expanding $u_i$ as the sum
\begin{equation}
u_i = \sum_{k} a_k \eta_k \, ,
\end{equation}
we obtain a linear system of equations
\begin{equation}
\left( L + \alpha^2 D + \gamma R \right) \vect{a} =   \gamma \vect{r} \, ,
\end{equation}
and we solve for $\vect{a}$. Notice that changing constants $(\alpha,\gamma)$ do not require reassembly of matrices. The matrix entries are given as follows: the Laplacian matrix $L$ is given by
\begin{align}
L_{kl} & = \int_{\Om_i} \NabX \eta_k (\xv) \cdot \NabX \eta_l (\xv)\, \dd \xv \, ;
\end{align}
the exponential decay matrix $D$ is given by
\begin{align}
D_{kl} & = \int_{\Om_i} \eta_k (\xv) \,\eta_l (\xv) \,\dd \xv \, ;
\end{align}
and the robin boundary condition comes in two parts, first the matrix $R$,
\begin{align}
R_{kl} & =    \int_{ \p \Om_i } \eta_k (\xv) \, \eta_l (\xv)\, \dd S \, , 
\end{align}
and then the vector $\vect{r}$ with entries
\begin{align}
r_k  & = \int_{\p \Om_i} \eta_k (\xv) \, \psi_i (\xv)\,\dd S \, .
\end{align}
Within this finite element framework, calculating the \emph{baseline} is then 
\begin{align}
\bv_i &= \sum_{T\in \mathcal{T}(\CellDom_i)} \int_T a_k \eta_k(\xv) \dd \xv \, ,\\
     &= \sum_{T\in \mathcal{T}(\CellDom_i)}  \frac{\vert T\vert }{3} [a_{t_1} + a_{t_2} + a_{t_3}] \, .
\end{align}

Calculating these inner products between basis functions can be challenging. We now give a practical guide to implementation.

\subsection{Practical Approach to FEM Implementation}

To practical implement our FEM scheme, instead of putting in the entries to the matrices $L, B, R$ and vector $b$ in individually, it is much simpler to do it each triangle or boundary edge at a time. We write $L$ and $B$ as a sum over the triangles $T$ in the triangulation $\mathcal{T}(\Om_i)$
\begin{align}
L &= \sum_{T \in \mathcal{T}(\Om_i)} L^{(T)} \, ,  \\ 
D &= \sum_{T \in \mathcal{T}(\Om_i)} D^{(T)} \, .
\end{align}
The entries of $L^{(T)}$ are given as
\begin{align}
L_{kl}^{(T)} = \left\{ \begin{array}{cc} \tilde{L}_{t_k, t_l}^{(T)} & \text{if vertices $(k,l)$ are part of triangle $T$}  \\ 0 & \text{otherwise} \end{array} \right\} \, ,
\end{align}
and analogously for $D^{(T)}$
\begin{align}
D_{kl}^{(T)} = \left\{ \begin{array}{cc} \tilde{D}_{t_k, t_l}^{(T)} & \text{if vertices $(k,l)$ are part of triangle $T$}  \\ 0 & \text{otherwise} \end{array} \right\} \, ,
\end{align}
and therefore $\tilde{L}$ and $\tilde{D}$ are $3\times 3$ matrices. These matrices have well known analytic solutions given as
\begin{align}
\tilde{L} = \frac{| T |}{2} G\, G^\dagger \, ,
\end{align}
for
\begin{align}
G = \left( \begin{array}{ccc} 1 & 1 & 1 \\ x_{t_1} & x_{t_2} & x_{t_3} \\ y_{t_1} & y_{t_2} & y_{t_3} \end{array} \right)^{-1} \left( \begin{array}{cc} 0 & 0 \\ 1 & 0 \\ 0 & 1 \end{array} \right) \, ,
\end{align}
and 
\begin{align}
\tilde{D} =   \frac{| T |}{12}  \left( \begin{array}{ccc} 2 & 1 & 1 \\ 1 & 2 & 1 \\ 1 & 1 & 2  \end{array} \right) \, .
\end{align}

For matrix $R$ and vector $\vect{r}$, we sum over edges $E$ that form the discretised boundary $\mathcal{E}(\p\Om_i)$
\begin{align}
R &= \sum_{E \in \mathcal{E}(\p\Om_i)} R^{(E)} \, , \\ 
\vect{b} &= \sum_{E \in \mathcal{E}(\p\Om_i)} \vect{b}^{(E)} \, ,
\end{align}
where 
\begin{align}
R_{kl}^{(E)} = \left\{ \begin{array}{cc} \tilde{R}_{e_k, e_l}^{(E)} & \text{if vertices $(k,l)$ are part of edge $E$}  \\ 0 & \text{otherwise} \end{array} \right\} \, ,
\end{align}
and therefore $\tilde{R}$ is a $2\times 2$ matrix given as
\begin{align}
\tilde{R} = \frac{| E |}{6} \left( \begin{array}{cc} 2 & 1 \\ 1 & 2 \end{array} \right)\, ,
\end{align}
for $\vert E \vert = \text{Length}(E) = \Vert \xv_{e_1} - \xv_{e_2} \Vert$. Similarly for entries of vector $\vect{r}$
\begin{align}
r_k = \left\{ \begin{array}{cc} \tilde{r}_{e_k}^{(E)} & \text{if vertex $k$ are part of edge $E$}  \\ 0 & \text{otherwise} \end{array} \right\} \, ,
\end{align}
and therefore 
\begin{align}
\tilde{r}_k = \frac{| E |}{2} \psi_i (\xv_{e}) \, .
\end{align}

\end{document}